\newtheorem{thrm}{Theorem}[section]
\begin{document}

\title{A Distributed Framework for Scalable Search over Encrypted Documents}

\maketitle

\begin{abstract}
Nowadays, huge amount of documents are increasingly transferred to the remote servers due to the appealing features of cloud computing. On the other hand, privacy and security of the sensitive information in untrusted cloud environment is a big concern. To alleviate such concerns, encryption of  sensitive data before its transfer to the cloud has become an important risk mitigation option. Encrypted storage provides protection at the expense of a significant increase in the data management complexity. For effective management, it is critical to provide efficient selective document retrieval capability on the encrypted collection. In fact, considerable amount of searchable symmetric encryption schemes have been designed in the literature to achieve this task. However, with the emergence of big data everywhere, available approaches are insufficient to address some crucial real-world problems such as scalability.

In this study, we focus on practical aspects of a secure keyword search mechanism over encrypted data on a real cloud infrastructure. 
First, we propose a provably secure distributed index along with a parallelizable retrieval technique that can easily scale to big data. 
Second, we integrate authorization into the search scheme to limit the information leakage in multi-user setting where users are allowed to access only particular documents. 
Third, we offer efficient updates on the distributed secure index. 
In addition, we conduct extensive empirical analysis on a real dataset to illustrate the efficiency of the proposed practical techniques.
\end{abstract}

\section{Introduction}

In recent years, advances in cloud computing has led to a rapid transition in information systems. Cloud services remove the burden of large scale data management in a cost-effective manner. Hence, it is quite common for individuals and organizations to outsource their documents. At the same time, storage of sensitive information in untrusted cloud environment raises serious privacy concerns. To resolve such concerns, one approach is to transfer documents in their encrypted form. Encrypted storage protects sensitive information at the expense of a significant reduction in the data computation capability. Among potential computations on the remote encrypted collection, selective retrieval is highly important for many mission critical tasks. In fact, a substantial amount of research effort has been invested to enable efficient execution of this operation.

Searchable symmetric encryption (SSE) schemes are the most common tools for searchable and secure cloud storage \cite{Chang,Curtmola,Goh,ccs,SDM}. 
Available SSE schemes enable selective document retrieval over encrypted collection. However, they do not address some practical problems of real systems such as scalability since they are mainly designed to run on a single server.  With the emergence of big data everywhere, scalability becomes a fundamental requirement for cloud systems. Fortunately, this challenge is resolved by an effective design principle that enforces distribution of both data and computation to multiple commodity hardwares. In fact, old storage systems are increasingly replaced with the ones that suits well to this distributed paradigm such as BigTable \cite{BigTable} and HBase \cite{Hbase}. Accordingly, it is critical to design a SSE scheme that can easily be distributed to many machines to handle very large amounts of documents. Another important practical aspect is to consider the access rights of distinct users during the retrieval. 
Data owners generally share limited amount of documents with other users.
To prevent excessive information leakage to the remote servers, integration of authorization into the search scheme is crucial. 

In this study, we propose a distributed secure index along with a parallelizable search mechanism. Specifically, we generate an inverted index on the document corpus which is subject to padding and secure encryption before its transfer to the cloud. This encrypted index is further vertically partitioned among multiple servers to enable simultaneous decryption of large index payloads during the search process. To our knowledge, this is the first effort in designing provably secure, distributed SSE scheme that can easily scale to big data. We also integrate authorization into the search to restrict the information leakage of the scheme according to user access rights. Finally, we propose a secure update mechanism on the distributed index which is a necessary functionality for practical cloud storage systems.
In summary, there are several notable contributions of this study:

{\bf Distributed SSE Scheme:} Current cloud infrastructures consist of many machines. To utilize their computational power to handle big data, we propose a secure index structure on top of a distributed key-value store known as HBase \cite{Hbase}. Proposed approach is highly scalable.  Our empirical evaluations indicate that search operation can be performed less than a second for approximately 1,200,000 documents on an HBase cluster of twelve machines.

{\bf Authorization-Aware Secure Keyword Search:} Almost all practical SSE schemes selectively leak information to the remote servers for efficiency. Although the leaked information varies among protocols, access pattern leakage is common. That is, untrusted server learns identifiers of the documents that are in the result set of an issued query without observing their contents. To restrict this leakage in multi-user setting, proposed scheme reveals only identifiers of the documents on which the user is authorized.

{\bf Secure Update on the Distributed Index:} Real document storage systems are highly dynamic. Some new documents are produced in time while some documents become outdated. To reflect these changes, we propose an effective update mechanism on the distributed index.   

The remainder of the paper is organized as follows. We review related work in Section \ref{Related}. We present distributed secure search scheme and its update mechanism in Section \ref{DistributedSearch} and Section \ref{Update} respectively. Then, in Section \ref{Security}, we analyze the security of the scheme. Finally, we report our experimental analysis in Section \ref{Experiments} and conclude in Section \ref{Conclusion}.

\section{Related Work}
\label{Related}
Over the years, various protocols and security definitions have been
proposed for searchable symmetric encryption (SSE). Optimal security is achieved by the ORAM model \cite{oblivous} of Goldreich et al. which does not leak any information. However, this model is impractical due to the excessive computational costs. Even the more efficient versions of ORAM  \cite{soda, revisited, pracORAM} are not practical enough to scale well for big data, especially in a multi-user setting.

As an alternative to ORAM, there are approaches (\cite{Chang,Curtmola,Goh,ccs,Practical,SDM}) which selectively leak information (e.g., access pattern) for more practical SSE schemes. The first of such approaches was provided in \cite{Practical}. Later on, Goh et al. proposed a security definition to formalize the security requirements of SSE in \cite{Goh}. Similarly, Chang et al. introduced a slightly stronger definition in \cite{Chang}. However, both definitions do not consider adaptive adversaries which could generate the queries according to the outcomes of previous queries. The shortcomings of \cite{Chang} and \cite{Goh} have been addressed in \cite{Curtmola}, where Curtmola et al. presented adaptive semantic security definition for SSE schemes.

The most computationally efficient SSE schemes that are compatible with the adaptive semantic security definition are presented in \cite{SDM}, \cite{ccs} and \cite{Kamara}. In \cite{SDM}, inverted bit vector indices are generated for each unique keyword. Then bit vectors are masked with a secure encryption scheme. These encrypted vectors can be stored and processed in our distributed framework. However, the construction of \cite{SDM} requires interaction with the user during the search process and its pure bit vector index structure is not efficient in terms of storage. To enable a single-round search through a more compact index along with an update mechanism, dynamic SSE scheme was proposed in \cite{ccs}. The approach of \cite{ccs} is based on a linked list structure. Specifically, for each keyword $w_i$,  identifiers of the documents that contain $w_i$ are randomly distributed to the cells of an array. Then, these cells are linked to each other with pointers. Finally, both pointers and cell contents are encrypted with a secure encryption scheme. Search mechanism of this construction necessitates sequential tracing on the encrypted list which hinders parallelizable search. To resolve the sequential tracing problem, a scheme that is based on red-black trees is proposed in \cite {Kamara}. In this construction, multiple-processors could apply decryption on distinct branches of the tree during the search but the whole tree resides in a single machine which is not compatible with the big data design principles. This tree keeps a leaf node for each document that consists of m-bit vector where m is the number of keywords in the corpus along with internal nodes that also consists of m-bit vectors. Hence, if millions of keywords exist in a very large document corpus, single machine, multi-processor setting will not be sufficient. In \cite{Cash}, a SSE scheme with support for conjunctive queries was proposed. Although this scheme could be parallelized in theory, it does not provide an update mechanism on the index which is fundamental for real world applications. Also, these SSE schemes leak access pattern without considering access-rights of distinct users.

\section{Secure Keyword Search}
\label{DistributedSearch}

In this part, we present our secure keyword search scheme that is based on a distributed index structure. In Section \ref{Index}, we describe the index construction mechanism. Then, in Section \ref{Search}, we focus on the search that is built on top of the index. Finally, in Section \ref{Authorization}, we provide an extension on the scheme for authorization integration.

\subsection{Secure Index Construction}
\label{Index}

The proposed secure index is formed in three main phases:

\vspace{1mm}
{\bf 1. Plain Index Generation:} In this phase, each keyword is associated with a set of documents. Specifically, suppose \{$D_1$, $...$, $D_n$\} is a set of documents with contents \{$W(D_1)$, $...$, $W(D_n)$\}, $id(D_j)$ is the identifier of $D_j$ and \{$w_1$, $...$, $w_z$\} represents the set of keywords. Then an inverted index \{$(w_1, L_{w_1})$, $...$, $(w_z, L_{w_z})$\} is formed such that 
$id(D_j) ~\in ~L_{w_i}$ if and only if $w_i ~\in ~W(D_j)$.

After inverted index construction, a plain payload is generated for each keyword in the form of a bit vector or a list. Payload type of a keyword is identified according to its frequency in the corpus with the goal of minimizing total storage cost. Here, both payload types should have a fixed size to hide keyword frequency. In fact, each bit vector consists of $n$ bits where $n$ represents the number of documents in the corpus. Similarly, each list consists of $\Upsilon$ document identifiers where $\Upsilon$ is a constant. During the construction, if the frequency of a keyword is more than $\Upsilon$, its payload is represented as a bit vector. Otherwise, it is represented as a list. In this setting, suppose $\Delta$ = \{$f_1$, $f_2$, ..., $f_z$\} represents the expected frequency distribution of the keywords, $|id|$ is the bit length of any identifier and $I(.)$ is an indicator function. Then expected storage cost denoted as $E(\Upsilon, \Delta, n)$ can be computed as follows: 

\vspace{-3mm}
{\footnotesize{
\begin{equation}
E(\Upsilon, \Delta, n) = \sum\limits_{i=1}^z I(f_i > \Upsilon) \cdot n + (1 - I(f_i > \Upsilon)) \cdot \Upsilon \cdot |id| 
\end{equation}}}
\vspace{-2mm}

To identify optimal $\Upsilon$ for minimal storage cost without leaking any information, we utilize Zipf's law \cite{zipf} which states that frequency of any keyword is inversely proportional to its frequency based rank in a natural language corpus. In our construction, payloads of top-t ranked keywords are put into bit vector form and the remaining ones are represented as a padded list as depicted in Figure \ref{Fig:Storage}.

\begin{figure}[!htb] 
\begin{center}
\includegraphics[width=55mm, height=25mm]{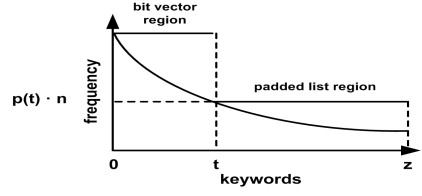}
\end{center}
\vspace{-7mm}
\caption{Payload Type Selection}
\label{Fig:Storage}
\end{figure}

More formally, suppose $H(N)$ represents $N^{th}$ harmonic number, $p(x; 1, N)$ represents the Zipfian distribution with exponent 1 such that $p(x) = x^{-1} \cdot H(N)^{-1}$ and $z$ is the number of keywords in the corpus. Then a padded list will have  $\Upsilon =  \lceil p(t) \cdot n \rceil$ members and the expected storage cost can be computed as follows:

{\footnotesize{
\begin{equation}
E(\Upsilon, p, n) = t \cdot n +  (z-t) \cdot p(t) \cdot  n  \cdot |id|
\end{equation}}}
\vspace{-4mm}

\begin{thrm}\label{thrm:storage}
Optimal member size for a padded list denoted as $\Upsilon$ with respect to the expected storage cost $E(\Upsilon, p, n)$ is $\lceil (|id| \cdot  H(z) \cdot z)^ {-0.5} \cdot n \rceil$.
\end{thrm}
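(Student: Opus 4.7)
The plan is to reduce the two apparently free parameters $\Upsilon$ and $t$ to a single variable using the relation $\Upsilon = \lceil p(t) \cdot n \rceil$ supplied by the index construction, and then minimize $E$ by ordinary calculus. Concretely, from $p(x) = x^{-1} \cdot H(z)^{-1}$ we have, ignoring the ceiling, $\Upsilon = n / (t \cdot H(z))$, which inverts to $t = n / (\Upsilon \cdot H(z))$. Substituting this into
$$E(\Upsilon, p, n) = t \cdot n + (z-t) \cdot \Upsilon \cdot |id|$$
yields a function of $\Upsilon$ alone, namely $E(\Upsilon) = n^2/(\Upsilon \cdot H(z)) + z \cdot |id| \cdot \Upsilon - n \cdot |id|/H(z)$, in which the last summand is a constant and can be ignored for optimization purposes.

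Next I would differentiate $E(\Upsilon)$ with respect to $\Upsilon$, obtaining $dE/d\Upsilon = -n^2 /(\Upsilon^2 \cdot H(z)) + z \cdot |id|$, and set it to zero. Solving for $\Upsilon$ gives $\Upsilon^2 = n^2 / (z \cdot |id| \cdot H(z))$, i.e., $\Upsilon = n \cdot (|id| \cdot H(z) \cdot z)^{-0.5}$, which matches the claimed expression. To confirm that this critical point is a minimum rather than a maximum or inflection, I would check the second derivative $d^2E/d\Upsilon^2 = 2n^2/(\Upsilon^3 \cdot H(z))$, which is strictly positive for all $\Upsilon > 0$, so $E$ is convex on the positive reals and the critical point is the unique global minimizer.

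Finally, since $\Upsilon$ must be an integer (it is a list length), I would round up to $\lceil n \cdot (|id| \cdot H(z) \cdot z)^{-0.5} \rceil$; the convexity of $E$ ensures that the integer optimum is one of the two integers adjacent to the real optimum, and the ceiling is the natural choice here because padded lists are sized to accommodate all keywords whose frequency falls below the bit-vector threshold. I expect the main subtlety to be bookkeeping around the substitution, in particular keeping clear that $t$ and $\Upsilon$ are linked via Zipf's law rather than independent; once the one-variable form is written down, the rest is a textbook convex optimization. The ceiling, which looks like a real obstacle at first glance, is benign because of convexity, so no case analysis beyond noting this is required.
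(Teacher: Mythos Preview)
Your proposal is correct and follows essentially the same approach as the paper: both reduce to a one-variable calculus minimization using the Zipf relation $\Upsilon = p(t)\cdot n = n/(t\,H(z))$. The only cosmetic difference is that the paper differentiates with respect to the rank threshold $t$ and then converts the optimal $t^\star=\sqrt{|id|\,z/H(z)}$ into $\Upsilon$ via $\Upsilon=p(t^\star)\cdot n$, whereas you first eliminate $t$ and differentiate in $\Upsilon$; since the map $t\mapsto\Upsilon$ is smooth and monotone these are equivalent, and your added second-derivative/convexity check and the remark on rounding are extra rigor the paper omits.
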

\begin{proof}
{\footnotesize{
${arg\,min}_t~E(\Upsilon, p, n) \Rightarrow \\$

$\frac{dE(\Upsilon, p, n)}{dt} = \frac{d}{dt}[t \cdot n ~+ ~  (z-t) \cdot t^{-1} \cdot H(z)^{-1} \cdot n \cdot |id|] = 0\\$
$\\$
${arg\,min}_t~E(\Upsilon, p, n) = \sqrt{\frac{|id| ~\cdot ~z} {H(z)}}\\$
$\\$
$\Upsilon = \lceil p ({arg\,min}_t~E(\Upsilon, p, n)) \cdot n \rceil = \lceil (|id| \cdot  H(z) \cdot z)^ {-0.5} \cdot n \rceil$}}
\end{proof} 

Note that $\Upsilon$ is identified using public distribution. Hence it does not leak any information regarding the dataset. Once $\Upsilon$ is identified, a plain payload $P_{w_i}$ is generated for each keyword $w_i$. Specifically, suppose \{$(w_1, L_{w_1})$, $...$, $(w_z, L_{w_z})$\} is an inverted index where $L_{w_i}$ is a list of identifiers for the documents that contain $w_i$. Here, each identifier is an integer from $1$ to $n$ where n is the number of the documents in the corpus. Then $P_{w_i}$ is in the form of an n-bit vector if $|L_{w_i}| ~> ~\Upsilon$ such that $P_{w_i}[j] = 1$ for each $j \in L_{w_i}$ and  $P_{w_i}[j] = 0$ otherwise.  If $|L_{w_i}| ~\le ~\Upsilon$, $P_{w_i}$ is in the form a list such that $P_{w_i}$ is generated by inserting $\Upsilon - |L_{w_i}|$ fake identifiers (i.e., $id = 0$) to the original list $L_{w_i}$. 

\vspace{1mm}

{\bf 2. Index Encryption:} Plain index construction step results in an inverted index with fixed size bit vector or list payloads. In this step, this index is subject to encryption. Specifically, suppose \{$(w_1, P_{w_1})$, $...$, $(w_z, P_{w_z})$\} is a plain index, $\Phi_{K_t}$ and $\Psi_{K_p}$ are pseudo-random functions with secret keys $K_t$ and $K_p$, $O_S : \{0,1\}^\kappa ~\times~ \{0,1\}^{*} \mapsto \{0,1\}^{\kappa}$ is a random oracle\footnote{Cryptographic keyed hash functions such as HMAC-SHA256 could be utilized as a random oracle \cite{ccs}.}. Then encrypted index \{$(E_w(w_1), E_p(P_{w_1}))$, $...$, $(E_w(w_z), E_p(P_{w_z}))$\} is generated as follows:

\begin{list}{\labelitemi}{\leftmargin=0.2em}
\item Generate $E_w(w_i)$ such that $E_w(w_i)$ = $\Phi_{K_t}(w_i)$.
\item Generate a random oracle key $K(w_i)$ for the encryption of payload $P_{w_i}$ such that $K(w_i) = \Psi_{K_p}(w_i)$.
\item Construct payload encryption blocks.

1) Suppose $P_{w_i}$ is an n-bit vector and $\kappa$ is the output length of the random oracle. Then divide $P_{w_i}$ into $\kappa$ bit blocks $B_{w_i}^1$, ..., $B_{w_i}^\ell$ such that $\ell = \lceil n / \kappa \rceil$ and $B_{w_i}^j$ =  $P_{w_i}[(j - 1) \cdot \kappa + 1]$ ... $P_{w_i}[(j - 1) \cdot \kappa + \kappa]$. Here,  $P_{w_i}[k]$ is the $k^{th}$ bit of $P_{w_i}$ for $1 \le k \le n$ and zero otherwise.

2) Suppose $P_{w_i}$ is a list of identifiers with $\Upsilon$ elements such that $P_{w_i}$ = \{$id_{w_i}(1)$, ..., $id_{w_i}(\Upsilon)$\}, $|id|$ is the bit length of the identifiers and $\kappa$ is the output length of the random oracle. Then each encryption block can host $cp =  \lfloor \kappa ~/~ |id| \rfloor$ identifiers. Accordingly, divide $\Upsilon$ members into blocks $B_{w_i}^1$, ..., $B_{w_i}^\iota$ such that $\iota = \lceil \Upsilon / cp \rceil$ and $B_{w_i}^j$ =  $id_{w_i}((j - 1) \cdot cp + 1)$ ... $id_{w_i}((j - 1) \cdot cp + cp)$. Here, $id_{w_i}(k)$ is the $k^{th}$ identifier in $P_{w_i}$ for $1 \le k \le \Upsilon$ and zero otherwise.

\item Suppose $K(w_i)$ is the random oracle key, $B_{w_i}^1$, ..., $B_{w_i}^r$ are the encryption blocks for $P_{w_i}$, $O_S(K(w_i), j)$ denotes the output of random oracle $O_S$ with key $K(w_i)$ when it is applied on input $j$ and $\oplus$ represents the xor operator. Then generate encryption of $P_{w_i}$ denoted as $E_p(P_{w_i})$ as follows\footnote{This encryption construct is derived from the random oracle based
searchable encryption construction of \cite{ccs}.}:

\vspace{-3mm}
{\footnotesize{
\begin{align}
E_p(P_{w_i}) &= \{\pi_{w_i}[1], ... , \pi_{w_i}[r] \} \\
\pi_{w_i}[j] &= (B_{w_i}^j ~\oplus ~O_S(K(w_i), j),~ j)
\end{align}}}
\end{list}
\vspace{-1mm}

{\bf 3. Index Partitioning:} After the construction of encrypted index, it is transferred to the cloud. In cloud environment, encrypted index is split among multiple servers to parallelize the decryption of large index payloads during the keyword search phase. Specifically, encrypted payloads are divided into multiple regions which are further distributed to the region servers as depicted in Figure \ref{Fig:Partition}. 

\begin{figure}[!htb] 
\begin{center}
\includegraphics[width=80mm, height=35mm]{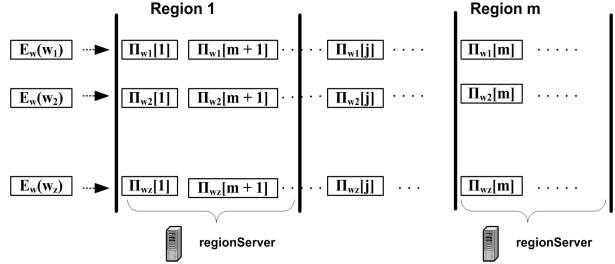}
\end{center}
\vspace{-6mm}
\caption{Index Partitioning}
\label{Fig:Partition}
\end{figure}

In this study, we utilize the features of a distributed key-value store known as HBase \cite{Hbase} for index partitioning. In HBase, key-value pairs can be split into distinct regions according to their keys. To construct such regions, region bound keys $R_1$, $R_2$, ..., $R_m$ are generated such that $R_1$ $<$ $R_2$ $<$ $...$ $<$ $R_m$. After bound initialization, key-value pairs are mapped to the regions according to relative position of their keys with respect to these bounds. Specifically, suppose $(k_i, v_i)$ is a pair that will be hosted in the store. Then it is placed into the $j^{th}$ region provided that $R_j ~\le~ k_i ~<~ R_{j+1}$. To form partitions on encrypted index \{$(E_w(w_1), E_p(P_{w_1}))$, $...$, $(E_w(w_z), E_p(P_{w_z}))$\}, we convert it into key-value pairs and map them into $m$ distinct regions as follows:

\begin{list}{\labelitemi}{\leftmargin=0.2em}
\item Suppose $\alpha$ is the bit length for the encrypted keywords such that $\alpha ~= ~|E_w(w_i)|$, $m$ is the number of regions, $0^{\alpha}$ represents a bit string of $\alpha$ zeros and $||$ is concatenation operator. Then generate region bound keys $R_1 = 1||0^{\alpha}$, $ R_2 = 2||0^{\alpha}$, $...$, $R_m = m||0^{\alpha}$.
\item Suppose $(E_w(w_i), E_p(P_{w_i}))$ is an encrypted keyword, payload pair where $E_p(P_{w_i})$ consists of $r$ encrypted blocks such that $E_p(P_{w_i})$ = \{$\pi_{w_i}[1]$, $\pi_{w_i}[2]$, $...$, $\pi_{w_i}[r]$\}. Then $m$ key-value pairs are constructed on  $(E_w(w_i), E_p(P_{w_i}))$. Specifically, $1||E_w(w_i)$, $2||E_w(w_i)$, $...$, $m||E_w(w_i)$ are the keys and value for key $j||E_w(w_i)$ is an encrypted block set \{$\pi_{w_i}[j]$, $\pi_{w_i}[j + m]$, $...$,  $\pi_{w_i}[j + m \cdot \lfloor r/m \rfloor]$\}. Here,  $\pi_{w_i}[k]$ is the $k^{th}$ block of $E_p(P_i)$ for $1 \le k \le r$ and empty otherwise.
\item Each key-value pair ($j||E_w(w_i)$, \{$\pi_{w_i}[j]$, $\pi_{w_i}[j + m]$, $...$,  $\pi_{w_i}[j + m \cdot \lfloor r/m \rfloor]$\}) is stored in  HBase. Value for row-key $j||E_w(w_i)$ is distributed to $\lceil r/m \rceil$ distinct columns. In this setting, pair with key $j||E_w(w_i)$ is stored in the $j^{th}$ region by the construction. This is because $j||E_w(w_i)$ lies between region bound keys $R_j = j||0^{\alpha}$ and $R_{j+1} = (j + 1)||0^{\alpha}$.
\end{list}

\subsection{Secure Search Mechanism}
\label{Search}

Once secure index is constructed and distributed to the cloud region servers, data users can perform search on them with the help of the secrets they own. 
During this setup, data owner also transfers the encrypted document collection  into the cloud along with the index. In this setting, encrypted collection denoted as $\{C_1, ..., C_n\}$ is obtained by applying secure encryption (e.g., AES in CTR mode of operation \cite{Goldwasser}) on the document collection. Suppose $Enc_{K_{coll}}$ represents the secure encryption with key $K_{coll}$ and $D_i$ is the $i^{th}$ document in the collection. Then $C_i = Enc_{K_{coll}}(D_i)$.  After the transfer of encrypted collection, it is stored in the distributed file system of the cloud service provider.

High-level overview of the search process is depicted in Figure \ref{Fig:Search}. Initially user generates a trapdoor for the keyword that he/she is interested in. Then he/she sends it to the cloud master server which directs it to the region servers. Each region server extracts the document identifiers for the issued trapdoor from the partial indices they have and send them back to the master. Finally, master combines the partial results and transfer the encrypted documents that are in the final result set to the user.

\begin{figure}[!htb] 
\begin{center}
\includegraphics[width=65mm, height=25mm]{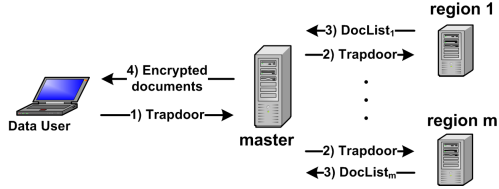}
\end{center}
\vspace{-5mm}
\caption{Search Mechanism}
\label{Fig:Search}
\end{figure}

Proposed search mechanism is in the form of well-known Map-Reduce programming model \cite{MapReduce} where region servers act as mappers and master server acts as a reducer. Here, map function extracts plain document identifiers from encrypted payload blocks with the help of the issued trapdoor. Then, reduce function combines the partial document identifier lists received from the mappers to finalize the search operation. Search process is performed in three main steps.

\vspace{1mm}
{\bf 1) Trapdoor Generation:} Trapdoor for retrieving documents that contain keyword $w_i$ denoted as $T_{w_i}$ is formed using secret keys $K_t$ and $K_p$. Specifically, let $\Phi_{K_t}$ and $\Psi_{K_p}$ be pseudo-random functions with secret keys $K_t$ and $K_p$. Then $T_{w_i}$ = ($E_w(w_i)$, $K(w_i)$) such that $E_w(w_i)$ = $\Phi_{K_t}(w_i)$ and $K(w_i)$ = $\Psi_{K_p}(w_i)$. Once $T_{w_i}$ is formed, it is sent to the cloud master server by the user.
\vspace{1mm}

{\bf 2) Region Search:} Upon reception of $T_{w_i}$ = ($E_w(w_i)$, $K(w_i)$), master directs it to the region servers to find the identifiers of the documents that contain $w_i$. Then the server that hosts $j^{th}$ region generates row-key $j||E_w(w_i)$. If ($j||E_w(w_i)$,\{$\pi_{w_i}[j]$, $\pi_{w_i}[j + m]$, $...$,  $\pi_{w_i}[j + m \cdot \lfloor r/m \rfloor]$\}) is a member of key-value store, server applies decryption on the encrypted payload blocks. In this setting, each block $\pi_{w_i}[k]$ is in the form of ($B_{w_i}^k ~\oplus ~O_S(K(w_i), k)$, $k$) by the construction where $O_S$ is a random oracle and $B_{w_i}^k$ represents the $k^{th}$ plain payload block for $w_i$. To obtain $B_{w_i}^k$, server performs decryption using $K(w_i)$ as follows:

\vspace{-2mm}
{\footnotesize{
\begin{equation*}
B_{w_i}^k = (B_{w_i}^k ~\oplus ~ O_S(K(w_i), k))  ~\oplus~ O_S(K(w_i), k)
\end{equation*}}}
\vspace{-4mm}

After decryption, server extracts document identifiers from each block $B_{w_i}^k$. Note that, each plain block is in the form of a bit vector or a list. If a block is in the bit vector form, identifiers are obtained from the bit locations with value one. If the block is in the list form, identifiers are obtained by decoding the encoded integers in it. Here, some decoded integers represent fake identifiers (i.e., $id = 0$) and they are eliminated by the region server at this phase. Once document identifiers are extracted for the issued trapdoor, region server forms identifier list for the $j^{th}$ region denoted as $doc_{w_i}(j)$ and transfers it to the master.

\vspace{1mm}

{\bf 3) Document Transfer:} In this step, master receives partial search results from each region server and merges them to finalize the search. Suppose, $doc_{w_i}$ represents the identifiers of the encrypted documents that contain $w_i$ and $doc_{w_i}(j)$ is the partial identifier list that is received from the $j^{th}$ region. Then $doc_{w_i}$ =  $doc_{w_i}(1)$ $~\cup~$ ... $~\cup$~ $doc_{w_i}(m)$. Finally encrypted documents, identifiers of which are included in $doc_{w_i}$, are sent back to the user.

\subsection{Authorization-Aware Keyword Search}
\label{Authorization}

Almost all efficient searchable symmetric encryption schemes leak some information for efficiency. Although this leakage varies among schemes, access pattern leakage is common. That is, untrusted
server learns which documents are included in the result set of
an issued trapdoor without observing their contents. More formally, suppose $D[q_i]$ represents the identifiers of the documents that are in the result set of query $q_i$. Then access pattern for $q_i$ denoted as $A_p(q_i)$ is a set such that  $A_p(q_i)$ =  $D[q_i]$. Unfortunately, access pattern leakage may subject to some adversarial analysis \cite{NDSS} and it is critical to restrict it in an efficient way \footnote{Although access pattern leakage can be eliminated completely with oblivious RAM, it is not practical enough to scale well for big data especially in multi-user setting.}.

In real-world data sharing, it is highly likely that data owner shares only subset of documents with other users through an authorization mechanism. 
Accordingly, if a data user issues a query $q_i$, he/she may have access to only some of the documents, identifiers of which are included in $A_p(q_i)$. However, available schemes do not consider access rights of the users and leak  $A_p(q_i)$ as it is during the search. To prevent excessive leakage in multi-user setting, we propose to integrate authorization into the process. Specifically, proposed scheme allows only the leakage of identifiers in $A_p(q_i)$ on which the user who issued the query is authorized.

In the context of this study, we form a secure search scheme with basic authorization as a first step toward search schemes with more sophisticated authorization techniques. Specifically, we utilize traditional file system access control lists (ACLs) \cite{Posix}. In ACLs, each file is included in a single access group and each user is assigned to multiple groups. Then, users are allowed to access only the files in their respective groups. More formally, suppose documents in the collection $D = \{D_1, ..., D_n\}$ are mapped to a group from the set $G = \{G_1, ...,G_p\}$ with function $g: D \mapsto G$. Similarly, user $U_i$ is assigned to a set of groups denoted as $G(U_i)$ such that $G(U_i) ~\subseteq ~G$. Then $U_i$ has access to document with identifier $id(D_j)$ if and only if $g(id(D_j)) ~\in ~G(U_i)$. Here, suppose $A_p(q_j)$ denotes the identifiers of the documents that are in the result set of query $q_j$ and $D(G_i)$ represents the identifiers of the documents that are in group $G_i$. Then restricted access pattern for $q_j$ and group $G_i$ denoted as $A_p(q_j, G_i)$ is a set such that $A_p(q_j, G_i)$ = $A_p(q_j) ~\cap ~D(G_i)$.

The main objective of authorization-aware keyword search is to restrict the access leakage according to the user access rights. Specifically, suppose $U_k$ issues a query $q_i$ and $G(U_k)$ denotes the groups that involves $U_k$. Then proposed scheme leaks only $A_p(q_i, G_\iota)$ for each $G_\iota \in G(U_k)$ instead of $A_p(q_i)$. To achieve this goal, we extend both secure index generation and search scheme of Sections \ref{Index} and \ref{Search}.

\vspace{1mm}
{\bf 1) Authorization-Aware Secure Index Construction:} In the basic index construction of Section \ref{Index}, each keyword $w_i$ is associated with encryption blocks $B_{w_i}^1$, ..., $B_{w_i}^r$, each of which contains the identifiers of the documents that include $w_i$. Then, these blocks are encrypted with the help of secret payload key $K_p$. In the extended version, we utilize multiple secret keys. Specifically, we generate group keys $K_{G_1}$, ..., $K_{G_p}$ and owner key $K_o$. The main goal of this design is to encrypt each document identifier with the key of its group. By this way, only the users that hold the corresponding group keys could generate a valid trapdoor for their decryption during the search phase. Here, owner key is formed for superusers such as data owner who access all the data to improve their search efficiency. After key generation, blocks are encrypted as follows:

\begin{list}{\labelitemi}{\leftmargin=0.2em}
\item If $B_{w_i}^j$ is a list block, it consists of concatenated document identifiers  $id_1||id_2 ... ||id_z$. Here, suppose $g(id_x)$ is the group of document with identifier $id_x$, $K_{G_i}$ is the secret key for group $G_i$, $\Psi_{K_{G_i}}$ is a pseudo-random function with key $K_{G_i}$,  $K(w_i, g(id_x))$ denotes random oracle key for pair $(w_i, g(id_x))$, $|id|$ is the bit length of an identifier and $e(V, \iota, \upsilon)$ is a function that extracts the block of bits between indices $\iota$ and $\upsilon$ from bit vector $V$. Then encryption of $B_{w_i}^j$ denoted as $\pi_{w_i}^j$ is formed as follows:

\vspace{-2mm}
{\footnotesize{
\begin{align*}
&K(w_i, g(id_x)) = \Psi_{K_{g(id_x)}}(w_i)\\
&\pi_{w_i}^j = (id_1 ~\oplus ~e(O_S(K(w_i, g(id_1)) , j), ~1, ~|id|) ~~~~~||~~~~~~ ...\\
&|| ~id_z \oplus~ e(O_S(K(w_i, g(id_z)) , j), (z-1) \cdot|id|+1, z \cdot |id|), ~j)
\end{align*}}}

\vspace{-2mm}
\item  If $B_{w_i}^j$ is a bit vector of length $k$, it consists of bits each of which represent a document identifier denoted as $id_{j,1}$, $id_{j,2}$, $...$, $id_{j,k}$ respectively. Here, suppose $G(B_{w_i}^j)$ is the set of groups for   the identifiers in block $B_{w_i}^j$ such that $G(B_{w_i}^j)$ =   $g(id_{j,1})$ $\cup~$ $...$  $~\cup$ $g(id_{j,k})$ where $g(id_{j,\rho})$ represents the group of document with identifier $id_{j,\rho}$. Then bit vector is initially encrypted with each group key $K_{G_{\iota}}$  where $G_{\iota} \in G(B_{w_i}^j)$ to form encrypted bit vector $\pi_{w_i}^j(G_{\iota})$:

\vspace{-4mm}
{\footnotesize{
\begin{align*}
&K(w_i, G_\iota) = \Psi_{K_{G_\iota}}(w_i)\\
&\pi_{w_i}^j(G_\iota) = B_{w_i}^j ~\oplus~ O(K(w_i, G_\iota), j)
\end{align*} }}
\vspace{-5mm}

Once encrypted bit vector is generated for each group in $G(B_{w_i}^j)$, final encrypted bit vector denoted as  $\pi_{w_i}^j$ is formed by group oriented bit selection. Specifically, suppose $id_{j,1}$, $...$, $id_{j,k}$ are the identifiers that is represented by bits 1, ..., k. Then $\rho^{th}$ bit of $\pi_{w_i}^{j}$ is set to $\rho^{th}$ bit of $\pi_{w_i}^j(g(id_{j, \rho}))$ where $g(id_{j, \rho})$ is the group of document with identifier $id_{j, \rho}$.
\end{list}

\vspace{-2mm}
After group encryptions, we also apply encryption with key $K_o$ on the blocks to speed-up the search for owner queries. Here, encryption of a block $B_{w_i}^j$ denoted as $\phi_{w_i}^j$ is performed in the same way as the block encryption of the basic index construction:

\vspace{-3mm}
{\footnotesize{
\begin{align*}
&K(w_i) = \Psi_{K_o}(w_i)\\
&\phi_{w_i}^j = B_{w_i}^j ~\oplus ~O(K(w_i), j)
\end{align*}}}
\vspace{-3mm}

In this setting, $(\pi_{w_i}^j, \phi_{w_i}^j, j)$ constitutes the final encryption for block  $B_{w_i}^j$. Apart from the distinction in payload block encryption, authorization-aware secure index construction is same as the basic secure index construction of Section \ref{Index}. For authorization-aware construction, documents in the collection are also encrypted according to their groups. Specifically, suppose $K_{G_\iota}^C$ is a secret collection key for group $G_\iota$, $g(id(D_i))$ denotes the group of $D_i$, $Enc_{K}$ is a secure encryption scheme with key $K$ and $C_i$ is the encrypted form of $D_i$. Then $C_i$ = $Enc_{K_{g(id(D_i))}^C} (D_i)$.

\vspace{1mm}
{\bf 2) Authorization-Aware Search Mechanism:}  

Authorization-aware search is an extension of the search mechanism that is presented in Section \ref{Search}. This extension integrates user-access rights into the search. For protocol execution, data owner shares secure index and group function $g : D \mapsto G$ which maps document identifiers to access groups with cloud service provider.  Service provider will further use this group information during the search to identify correct result-set against user trapdoors. Then owner shares keyword encryption key $K_t$, group keys $K_{G_j}$, $K_{G_j}^C$ for $G_j \in G(U_k)$ with user $U_k$. Here, $G(U_k)$ represents the groups that involves user $U_k$. Once necessary information is shared with the participants, trapdoor generation and region search on the cloud are performed as follows:

\vspace{1mm}
{\bf 1) Trapdoor Generation:} Trapdoor for retrieving documents that contain keyword $w_i$ denoted as $T_{w_i}$ is formed with secret keys $K_t$ and  $K_{G_j}$ for each $G_j \in G(U_k)$ = \{$G_{k,1}$, ..., $G_{k,f}$\}. Specifically, let $\Phi_{K_t}$ and $\Psi_{K}$ be pseudo-random functions with secret keys $K_t$ and $K$. Then $T_{w_i}$ = ($E_w(w_i)$, \{[$G_{k,1}$, $K(w_i, G_{k,1})$], $...$, [$G_{k, f}$, $K(w_i, G_{k, f})$]\}) where $E_w(w_i)$ = $\Phi_{K_t}(w_i)$ and $K(w_i, G_{k,j})$ = $\Psi_{K_{G_{k,j}}}(w_i)$. Once $T_{w_i}$ is formed, it is sent to the cloud master server by the user.

\vspace{1mm}
{\bf 2) Region Search:} Upon reception of $T_{w_i}$ = ($E_w(w_i)$, \{$[G_{k,1}, K(w_i, G_{k,1})]$, $...$, $[G_{k, f}, K(w_i, G_{k, f})]\})$, master directs it to the region servers.  Then each server locates the encrypted blocks using $E_w(w_i)$ as in the basic search scheme. Once they are located, decryption is applied. In this setting, each block consists of user and owner components $\pi_{w_i}^j$ and  $\phi_{w_i}^j$. Server decrypts $\pi_{w_i}^j$ for user trapdoors as follows:

\begin{list}{\labelitemi}{\leftmargin=0.2em}
\item If $\pi_{w_i}^j$ is a list block, it consists of concatenated encrypted document identifiers  $\omega_{id_{j,1}}||\omega_{id_{j,2}} ... ||\omega_{id_{j,z}}$. In this setting,  suppose $e(V, \iota, \upsilon)$ is a function that extracts the block of bits between indices $\iota$ and $\upsilon$ from bit vector $V$ and $|id|$ is the bit length of an identifier.  Then, for each oracle key $K(w_i, G_{k,\iota})$ in $T_{w_i}$, server performs the following operation:

\vspace{-3mm}
{\footnotesize{
\begin{align*}
id_{j, \rho}^{*} = \omega_{id_{j, \rho}} \oplus e(O_S(K(w_i, G_{k,\iota}), j), (\rho - 1) \cdot |id| + 1, \rho \cdot |id|)
\end{align*}}}

\vspace{-3mm}
Note that, legitimate document identifiers are integers between 1 and n and $g(id_{j, \rho})$ represents the group of the document with identifier $id_{j, \rho}$. In this setting, if $1 \le id_{j, \rho}^{*} \le n$ and $g(id_{j, \rho}^{*})$ = $G_{k,\iota}$, then $id_{j, \rho}^{*}$ is included in the search result set. Otherwise it is discarded by the server. It is clear that equality of $g(id_{j, \rho})$ and $G_{k,\iota}$ implies the equality of $id_{j, \rho}^{*}$ and $id_{j, \rho}$. On the other hand, decryption with a wrong group key will result in a random value.

\item  If $\pi_{w_i}^j$ is a bit vector block of length $k$, it consists of encrypted bits each of which represent document identifiers $id_{j,1}$, $...$, $id_{j,k}$. Here, suppose $G(B_{w_i}^j)$ is the set of groups for this block such that $G(B_{w_i}^j)$ = $g(id_{j, 1})$ $\cup$ $...$  $\cup$ $g(id_{j,k})$ and $G(U_k)$ is the set of groups that involves user $U_k$ who issued the trapdoor. Then bit vector is initially decrypted with each oracle key $K(w_i, G_{k, \iota})$ for $G_{k, \iota} ~\in ~(G(B_{w_i}^j)~ \cap ~G(U_k))$:

\vspace{-3mm}
{\footnotesize{
\begin{align*}
V_{w_i}^j(G_{k, \iota}) = \pi_{w_i}^j ~\oplus~ O(K(w_i, G_{k, \iota}), j)
\end{align*}}}
\vspace{-5mm}

Once bit vector is decrypted for each group in $G(B_{w_i}^j) \cap G(U_k)$, final  vector denoted as $V_{w_i}^j$ is formed by group oriented bit selection. Specifically, suppose $id_{j,1}$, ..., $id_{j,k}$ are the identifiers that are represented by bits 1, ..., k. Then $\rho^{th}$ bit of $V_{w_i}^j$ is set to $\rho^{th}$ bit of $V_{w_i}^j(g(id_{j,\rho}))$ if $g(id_{j,\rho}) \in ~G(U_k)$ and set to zero otherwise. Finally, if $\rho^{th}$ bit of $V_{w_i}^j$ is one,  $id_{j, \rho}$ is included in the search result. Note that, correct decryption of each bit can only be obtained with correct group key. Otherwise, decrypted bit will be random.
\end{list}

\vspace{-2mm}
After region servers identify document identifiers in their payloads, they transfer them to the master. Master merges the partial lists and sends the corresponding encrypted documents along with their groups to the user. In this setting, suppose $(C_i, G_\iota)$ is included in the result set. Then user decrypts $C_i$ with secret collection key $K_{G_\iota}^C$ to obtain its plain version. To speed-up the search for owner queries, encrypted blocks of authorization-aware secure index also contains owner blocks denoted as $\phi_{w_i}^j$. This enables the execution of basic scheme presented in Section \ref{Search} as it is.

\section{Secure Index Update}
\label{Update}
Real document storage systems are highly dynamic in their nature. Accordingly, data owner should be able to modify the encrypted collection that is hosted in the cloud. To achieve this goal, we extend our protocol and index structure to enable document deletion and addition.

\subsection{Document Deletion}
\label{Deletion}

Removal of a document from the cloud storage necessitates the elimination of the references against it that are included in the search index. To keep track of these references, we construct an update index. This new index
stores the cell addresses of the search index that are connected to a particular document and it is generated through three steps.

\vspace{1mm}
{\bf 1. Plain Index Construction:} During the search index setup, encryption blocks are constructed for each keyword in such a way that they consist of some slots, each of which contains or represents a particular document identifier. For any document $D_i$, `update index' stores these block and slot locations for the keywords that are included in $D_i$. More formally, suppose $D_i$ contains keyword set denoted as $W_i$ = \{$w_{i_1}$, ..., $w_{i_\ell}$\}, $E_w(w_{i_j})$ is the encrypted form of $w_{i_j}$, $name(D_i)$ is the unique filename of $D_i$, $block(w_{i_j}, D_i)$ and $slot(w_{i_j}, D_i)$ denote the block and slot order of $D_i$'s identifier in the encryption blocks of $w_{i_j}$. Then address list for $D_i$ denoted as $A(D_i)$ is a set such that $A(D_i)$ = \{$E_w(w_{i_1})$|| $block(w_{i_j}$,$D_i)$ || $slot$$(w_{i_j}$,$D_i)$, ..., $E_w(w_{i_\ell})$ || $block$$(w_{i_\ell}$,$D_i)$ || $slot$$(w_{i_\ell}$,$D_i)$\}. Finally, \{($name(D_1)$, $A(D_1)$), $...$, ($name(D_n)$, $A(D_n)$)\} constitutes plain `update index'.

\vspace{1mm}
{\bf Example \ref{Update}.1: } Consider the  plain search index blocks that are depicted in Figure \ref{Fig:PlainBlocks}. Encryption blocks for keyword $w_1$ are in in the form of bit vector while blocks for $w_2$ and $w_3$ are in the form of list. In this setting, address list for documents with identifiers $D_2$, $D_{56}$ and $D_{300}$ are as follows: $A(D_2)$ = \{$E_w(w_1)||1||2$, $E_w(w_2)||2||1$\}, $A(D_{56})$ = \{$E_w(w_2)||1||8$ \} and $A(D_{300})$ =  \{$E_w(w_1)||2||44$, $E_w(w_3)||2||5$\}.

\begin{figure}[!htb] 
\begin{center}
\includegraphics[width=75mm, height=25mm]{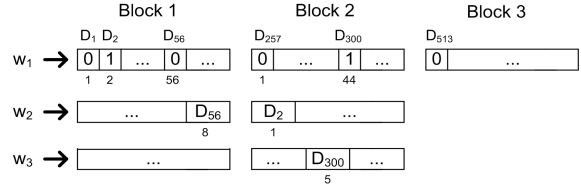}
\end{center}
\vspace{-6mm}
\caption{Sample Plain Search Index Blocks}
\label{Fig:PlainBlocks}
\end{figure}

{\bf 2. Document and Index Padding:} Note that, address list corresponding to document $D_i$  in plain `update index' contains $\ell$ cells provided that $D_i$ contains $\ell$ keywords. To hide the actual number of keywords included in a document, we apply padding on the document itself or its address list prior to their encryption. Suppose $avg_{|w|}$ is a parameter that indicates the unit bit length of a keyword, $|D_i|$ is the bit length of $D_i$, $A(D_i)$ is the address list for $D_i$ in the plain `update index'. Then $D_i$ is expected to contain $\xi(D_i)$ = $\lceil |D_i| / avg_{|w|} \rceil$ keywords.  In this setting, if $A(D_i)$ contains less than $\xi(D_i)$ keywords, we apply padding on the address list. Specifically, we insert  $|A(D_i)|$ - $\xi(D_i)$ fake cells (e.g., $E_w(fake)||-1||-1$) to $A(D_i)$. If $A(D_i)$ contains more than $\xi(D_i)$ keywords, we pad document itself with empty spaces until  $\xi(D_i)$ = $\lceil |D_i| / avg_{|w|} \rceil$.

\vspace{1mm}
{\bf 3. Index Encryption: } Suppose \{$(name(D_1), A(D_1))$, $...$, $(name(D_n), A(D_n))$\} is a padded plain `update index', $\Phi_{K_D}$ and $\Psi_{K_A}$ are pseudo-random functions with secret keys $K_D$ and $K_A$, $O_D$ is a random oracle. Then encrypted `update index' denoted as \{($E_D(D_1)$, $E_A(A_{D_1})$), $...$, ($E_D(D_n)$, $E_A(A_{D_n})$)\} is generated as follows:

\begin{list}{\labelitemi}{\leftmargin=0.2em}
\item  Generate $E_D(D_i)$ such that $E_D(D_i)$ = $\Phi_{K_D}(name(D_i))$.
\item Generate a random oracle key $K(D_i)$ for the address list encryption of $A(D_i)$ such that $K(D_i) = \Psi_{K_A}(name(D_i))$.

\item Suppose $K(D_i)$ is the random oracle key, $A_{D_i}^1$, ..., $A_{D_i}^r$ are members of $A(D_i)$, $O_D(K(D_i), j)$ denotes the output of random oracle $O_D$ with key $K(D_i)$ when it is applied on input $j$ and $\oplus$ represents  xor operator. Then generate encryption of $A(D_i)$ denoted as $E_A(A(D_i))$ as follows:

\vspace{-4mm}
{\footnotesize{
\begin{align*}
E_A(A(D_i)) &= \{\pi_{D_i}^1, ... , \pi_{D_i}^r \} \\
\pi_{D_i}^j &= (A_{D_i}^j ~\oplus ~O_D(K(D_i), j),~ j)
\end{align*}}}
\end{list}
\vspace{-2mm}

Once encrypted `update index' is generated, it is transferred to the cloud and stored in the key-value store. In this setting, for any document $D_i$, $(E_D(name(D_i))$ is a key and  $E_A(A(D_i))$ is the corresponding value. `Update index' enables data owner to delete documents from the remote servers. Deletion process is executed as follows:

\vspace{1mm}
{\bf 1. Deletion Token Generation: }  Token for deleting document with filename $name(D_i)$ denoted as $T_D(D_i)$ is formed using secret keys $K_D$ and $K_A$. Specifically, suppose $\Phi_{K_D}$ and $\Psi_{K_A}$ are pseudo-random functions with keys $K_D$ and $K_A$. Then $T_D(D_i)$ = ($E_D(D_i)$, $K(D_i)$) such that $E_D(D_i)$ = $\Phi_{K_D}$($name(D_i)$) and $K(D_i)$ = $\Psi_{K_A}$($name(D_i)$). Once $T_D(D_i)$ is formed, it is sent to the cloud master.

\vspace{1mm}

{\bf 2. Address Extraction: } Upon reception of $T_D(D_i)$ = ($E_D(D_i)$, $K(D_i)$), master fetches the value corresponding to row-key $E_D(D_i)$ from the key-value store. Suppose \{$\pi_{D_i}^1$, ... ,$\pi_{D_i}^r$\} is the value for 
$E_D(D_i)$. Then  master decrypts the encrypted addresses as follows:

\vspace{-1mm}
{\footnotesize{
\begin{equation*}
A_{D_i}^k =\pi_{D_i}^k ~\oplus~ O_D(K(D_i), ~k)
\end{equation*}}}
\vspace{-3mm}

After decryption, master obtains the address set for the search index cells which include reference to the document to be deleted. Note that, encrypted address list contains some fake entries by the construction. These fake entries are eliminated by the master at this phase. Final address list \{$A_{D_i}^1$, ..., $A_{D_i}^\rho$\} is utilized for adjusting the search index.

\vspace{1mm}

{\bf 3. Search Index Update: } Encrypted index blocks of the search index are distributed among $m$ distinct regions. To update search index,  master retrieves the necessary blocks from the region servers according to the address list \{$A_{D_i}^1$, ..., $A_{D_i}^\rho$\}. Note that, each address in the list is in the form of $E_w(w_j)$ || $block$$(w_j$,$D_i)$ || $slot$$(w_j$,$D_i)$. Here, $E_w(w_j)$ is the encrypted form of keyword $w_j$ which is included in $D_i$,  $block$$(w_j$, $D_i)$ and $slot$$(w_j$, $D_i)$ are the block and slot locations that holds the identifier for $D_i$.

By the search index construction, block with location $\upsilon_j^i$ = $block$$(w_j$,$D_i)$ is stored in the  $k^{th}$ region where $k =  \upsilon_j^i~ ~mod ~m$. Specifically, it is the value corresponding to  column $\upsilon_j^i$ of row-key $k||E_w(w_j)$. Master retrieves this value denoted as $\pi_{w_j}[\rho]$ from the corresponding region server and applies update on it according to its type as follows:

\begin{list}{\labelitemi}{\leftmargin=0.2em}
\item Suppose $\pi_{w_j}[\rho]$ is a bit-vector block. Then master updates its $\iota^{th}$ bit where $\iota = slot(w_j, id(D_i))$ such that  $\pi_{w_j}[\rho][\iota]$ = $\pi_{w_j}[\rho][\iota] \oplus 1$. Note that applied xor operation flips the bit to indicate the non-existence of keyword $w_j$ in $D_i$. Hence, further search results for $w_j$ will not include $id(D_i)$.
\item Suppose $\pi_{w_j}[\rho]$ is a list block. Then master extends the block with signal array denoted as $S_{w_j}[\rho]$ if there is no previous update on the block. Otherwise master updates the existing $S_{w_j}[\rho]$. In this setting, signal array is an $\eta$ bit vector where $\eta$ is the number of slots in a list block and each bit indicates the validity of the corresponding slot. If a deletion request is issued for the $\iota^{th}$ slot of the block where  $\iota = slot(w_j, D_i)$, $\iota^{th}$ bit is updated such that $S_{w_j}[\rho][\iota] = 1$ to invalidate this slot.

With signal array, query  evaluation on the search index is slightly different.  During the search, region servers check the signal array prior the decryption of the slots in a list block. If  $S_{w_j}[\rho][\iota] = 1$, then identifier that will be obtained from the decryption of $\iota^{th}$ slot is not included in the  result. 
\end{list}

\subsection{Document Addition}
\label{Addition}

To add a set of new documents to the encrypted collection, we propose a two-round protocol, overview of which is depicted in Figure \ref{Fig:Addition}. In the first round, data owner sends the number of slots that will be added to the search index corresponding to a set of keywords. According to these numbers, master transfers back the encrypted blocks with available slots. In the second round, data owner updates received blocks or generate some new blocks if available slots are not sufficient. Then, he forms the `update index' entries for the new documents. Finally, adjusted search index blocks and the new entries for the `update index' along with the encrypted documents are sent to the cloud.

\begin{figure}[!htb] 
\begin{center}
\includegraphics[height = 22mm, width = 80mm]{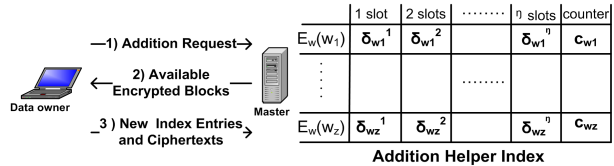}
\end{center}
\vspace{-6mm}
\caption{Document Addition Process}
\label{Fig:Addition}
\end{figure}

During the setup, search index consists of encrypted blocks which are in the bit-vector or list form. Since bit-vector form represents constant number of documents, we utilize only blocks of type list during the addition process. Although bit-vectors could be extended for addition in theory, it would be more costly in compare to lists. This is because, list insertion will modify only small subset of index while bit vector extension influences whole index. Hence, all blocks that will be inserted into the search index during the addition will be in the list form. To facilitate the addition, master server keeps an `addition helper index' as depicted in Figure \ref{Fig:Addition}. Note that, list blocks consist of slots, each of which contains a document identifier. Some of these slots are invalidated during the deletion process and they are available for further addition requests. In fact, helper index keeps track of these available slots along with the random oracle input counter for each encrypted keyword. 

\vspace{1mm}
{\bf Addition Helper Index: } Suppose $E_w(w_1)$, ..., $E_w(w_z)$ are the encrypted forms of keywords $w_1$, ..., $w_z$, $\eta$ is the number of slots in list blocks, $c_{w_i}$ denotes a counter for the blocks that are stored in the search index payload for $E_w(w_i)$, $\delta_{w_i}^j$ is a set of block locations with $j$ available slots. Then, value for helper index key  $E_w(w_i)$ denoted as $H_{w_i}$ is a pair such that $H_{w_i}$ = ([$\delta_{w_i}^1$, ..., $\delta_{w_i}^\eta$], $c_{w_i}$). During the search index setup, suppose $n_{w_i}$ encryption blocks are generated for keyword $w_i$. Then $c_{w_i}$ = $n_{w_i}$ and each $\delta_{w_i}^j$ for $1 \le j \le \eta$ is empty. Later, if $j$ slots of a list block for $E_w(w_i)$ are invalidated during the deletion process, location of this block is stored in  $\delta_{w_i}^j$. This indicates the availability of $j$ slots for the corresponding block.

\vspace{1mm}

Addition operation is performed in five main steps:

\vspace{1mm}
{\bf 1. Generation of addition request: } Suppose $D_U$ = \{$D_1^U$, ..., $D_\upsilon^U$\} is a collection of new documents with contents \{$W(D_1^U)$, ..., $W(D_\upsilon^U)\}$, \{$w_1^u$, ..., $w_\ell^u$\} is a set of all keywords that are included in the new documents, $id(D_i^U)$ is the identifier of $D_i^U$ and $\Phi_{K_t}$ is a pseudo-random function with secret key $K_t$. Then data owner generates an inverted index  \{($w_1^u$, $L(w_1^u)$), ..., ($w_\upsilon^u$, $L(w_\upsilon^u)$)\} such that $id(D_j^U) \in  L(w_i^u)$ if and only if $w_i^u \in W(D_j^U)$. Once inverted index is constructed, addition request  denoted as $T_A(D_U)$ is generated for the collection. Specifically, $T_A(D_U)$ = \{($E_w(w_1^u)$, $|L(w_1^u)|$), ..., ($E_w(w_\ell^u)$, $|L(w_\ell^u)|$)\} where $E_w(w_j^u) = \Phi_{K_t}(w_j^u)$ and $|L(w_j^u)|$ is the number of identifiers in $L(w_j^u)$ respectively.  Finally $T_A(D_U)$ is sent to the cloud master server.

\vspace{1mm}
{\bf 2. Block transfer:} Once master receives addition request, it utilizes `addition helper index' to locate blocks with available slots. If the number of slots is not sufficient, new blocks are generated by the user which we elaborate later. 

Suppose $(E_w(w_i^u)$, $|L(w_i^u)|)$ is included in $T_A(D_U)$ which implies that data owner needs  $|L(w_i^u)|$ slots for $E_w(w_i^u)$ to store new document identifiers. Then, master initially retrieves the value of $E_w(w_i^u)$ denoted as $H_{w_i^u}$ = ([$\delta_{w_i^u}^1$, ..., $\delta_{w_i^u}^\eta$], $c_{w_i^u}$) from the `addition helper index'. Then starting from  $\delta_{w_i^u}^\eta$ to  $\delta_{w_i^u}^1$, server extracts block locations from them until the total number of extracted slots reaches to $|L(w_i^u)|$. Note that, each block location that is extracted from $\delta_{w_i}^j$ contains $j$ slots. After this traversal, if the total number of slots does not reach to $|L(w_i^u)|$, data owner will generate new blocks. Specifically, if $r_i$ more slots are necessary, data owner will generate $\lceil r_i / \eta \rceil$ new blocks where $\eta$ denotes the number of slots in a single block.

Suppose \{$loc_{w_i^u}(1)$, ..., $loc_{w_i^u}(\rho_i)$\} are the block locations that are extracted by the helper index traversal and $m$ is the number of regions in the distributed search index.  Then these locations are removed from the helper index and corresponding blocks are retrieved from the search index. By the construction, block with location $loc_{w_i^u}(j)$ is stored in the $k^{th}$ region where $k = loc_{w_i^u}(j) ~mod~ m$. Master requests encrypted block with  location $loc_{w_i^u}(j)$ from the region server that hosts $k^{th}$ region. Finally retrieved blocks for $E_w(w_i^u)$ denoted as $\Delta(w_i^u)$ along with the counter in `addition helper index' denoted as $c_{w_i^u}$ are sent back to the data owner.

\vspace{1mm}
{\bf 3. Generation of new search index entries:} Suppose data owner receives block sets $\Delta(w_1^u)$, ..., $\Delta(w_\ell^u)$ against addition request. In this setting, $\Delta(w_i^u)$ consists of $\rho_i$ blocks, each of which contains some available slots to store new document identifiers. In this setting, each block in $\Delta(w_i^u)$ is in the form of $\pi_{w_i^u}[\iota]$ || $S_{w_i^u}[\iota]$ where $\pi_{w_i^u}[\iota]$ is the $\iota^{th}$ encrypted block in the search index for keyword $w_i^u$ and $S_{w_i^u}[\iota]$ is the signal array that keeps the invalid slot locations in it. Initially, data owner decrypts each block  $\pi_{w_i^u}[\iota]$ to obtain plain identifier list denoted as $id_{\iota,1}$|| ... || $id_{\iota, \eta}$. Then identifiers in invalid slots of the blocks are replaced with new identifiers according to $S_{w_i^u}[\iota]$. Specifically, suppose $L(w_i^u)$ is the identifier list for the new documents that contain $w_i^u$. Then, provided that $S_{w_i^u}[\iota][\upsilon] = 1$, a random member of  $L(w_i^u)$ is removed from it and  inserted into the $\upsilon^{th}$ slot of the plain block. If $L(w_i^u)$ is not empty after the identifier replacement on the retrieved blocks, data owner forms new blocks, slots of which are filled with the remaining member of  $L(w_i^u)$. Finally both old and new blocks which we call as update blocks are subject to encryption. 

Suppose  $c_{w_i^u}$  is the block counter for $w_i$, $id_{j,1}^u$||...|| $id_{j,\eta}^u$ is the content of $j^{th}$ plain block among the update blocks, $\Psi_{K_p}$ is a pseudo-random permutation with key $K_p$ and $O_S$($K(w_i^u)$, $j$) is the output of random oracle $O_S$ with key $K(w_i^u)$ when it is applied on input $j$. Then encrypted form of the block denoted as $\pi_{w_i^u}[j]$ is generated as follows\footnote{If authorization is enabled, each document identifier is encrypted through its group key as described in Section \ref{Authorization}.}:

\vspace{-2mm}
{\footnotesize{
\begin{align*}
&K(w_i^u) = \Psi_{K_p}(w_i^u) \\
&\pi_{w_i^u}[j] = ((id_{j,1}^u||...|| id_{j,\eta}^u) \oplus O_S(K(w_i^u), c_{w_i^u} + j), c_{w_i^u} + j)
\end{align*}}}
\vspace{-3mm}

After the encryption of the update blocks, their signal arrays are cleared to zero to indicate that all slots in them are valid and new block counter for $w_i$  is set to $c_{w_i^u} ~+ ~n_i$ where $n_i$ is the number of update blocks for $w_i$. 

\vspace{1mm}
{\bf 4. Generation of new update index entries:} Note that, we keep `update index' on the search index to enable further document deletion. This index stores the cell addresses of the search index that are connected to a particular document. Once identifiers corresponding to new documents are placed into the search blocks as explained in step 3, `update index' entries are generated according to these placements. More formally, suppose new document $D_i^u$ contains keyword set \{$w_{i_1}$, ..., $w_{i_z}$\}, $E_w(w_{i_j})$ is the encrypted form of $w_{i_j}$, $block(w_{i_j},D_i^u)$ and $slot(w_{i_j},D_i^u)$ denotes the block and slot locations that hosts the identifier of $D_i^u$ in the update blocks that are generated for $w_{i_j}$. Then address list for $D_i^u$ denoted as $A(D_i^u)$ is a set such that $A(D_i^u)$ = \{$E_w(w_{i_1})$|| $block(w_{i_j}$,$D_i^u)$ || $slot$$(w_{i_j}$,$D_i^u)$, ..., $E_w(w_{i_z})$ || $block$$(w_{i_z}$,$D_i^u)$ || $slot$$(w_{i_z}$,$D_i^u)$\}. In this setting, if update block is an old block that is received from the server, its location is its previous location. Otherwise, its location is equal to the block counter that is utilized during its encryption. Once address lists are formed for each new document, padding and encryption is applied on them as described  in Section \ref{Deletion}.

\vspace{1mm}
{\bf 5. Application of the updates:} At the final stage, new entries for `update index' and `search index' along with the encrypted documents are transferred to the cloud. Old search index blocks that are updated during the process are placed into their previous locations as depicted in Figure \ref{Fig:Distribution}. New search index blocks are uniformly distributed into the regions to assign similar load to each machine during the search process. In addition, block counter field of `addition helper index' is updated with the new block counters.

\begin{figure}[!htb] 
\begin{center}
\includegraphics[height = 19mm, width = 75mm]{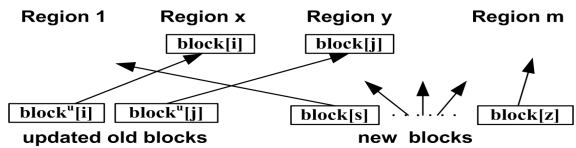}
\end{center}
\vspace{-6mm}
\caption{Distribution of the Search Blocks}
\label{Fig:Distribution}
\end{figure}

\section{Security Analysis}
\label{Security}
During the execution of the scheme, we assume that Bob who manages the cloud servers is semi-honest. As such, he follows the protocol as it is defined. However, he may try to infer private information about the documents he hosts. Over the years, many security definitions have been proposed for searchable symmetric encryption for semi-honest model. Among them, simulation based adaptive semantic security definition of \cite{Curtmola} is the widely accepted one in the literature. Later, it is
customized to work under random oracle model in \cite{ccs}  to be compatible with update operations. We adapt this definition for our construction in such a way that we consider user access-rights and keyword payload type while defining the legitimate information leakage. Adapted definition allows the leakage of  payload type, group-oriented access, addition, deletion and keyword patterns.

\vspace{1mm}
\textbf{Keyword Pattern ($\mathbf{KW_p}$):} Suppose $\{o_1, ..., o_\eta\}$ is a set of $\eta$ consecutive operations on the encrypted collection such that $o_i$ is a search, addition or deletion request. Note that each operation $o_i$ has a set of associated keywords denoted as $o_i^w$. Specifically, if $o_i$ is a search instance, it involves a single keyword such that $o_i^w = \{w_{i_1}\}$. If $o_i$ is a deletion, it involves a set of keywords that are included in the deleted document such that $o_i^w$ = \{$w_{i_1}$, ..., $w_{i_\upsilon}$\}. Finally, if $o_i$ is an addition, it involves a set of keywords that are included in the whole corpus of new documents such that $o_i^w$ = \{$w_{i_1}$, ... $w_{i_\varsigma}$\}. Then $KW_p$ is a function such that  $KW_p((i, \rho), (j,\ell)) = 1$ if $w_{i_\rho} = {w_{j_\ell}}$ and  $KW_p((i, \rho), (j,\ell)) = 0$ otherwise for $1 \le i, j \le \eta$, $1 \le \rho \le |o_i^w|$ and  $1 \le \ell \le |o_j^w|$.

To hide frequencies of the keywords that are included in the index in a storage optimal way, we utilize a fixed-size list or bit-vector payloads during the search index setup. If the number of documents that contain keyword $w_i$ is greater than a threshold $\Upsilon$, it is represented as a bit vector. Otherwise, it is represented as a list. To capture the leakage due to the payload type, we define a payload type pattern.

\vspace{1mm}
\textbf{Payload Type Pattern ($\mathbf{PT_p}$):} Suppose  \{$o_1$, ..., $o_\eta$\} is a set of $\eta$ consecutive operations, $o_i^w$ is a set of keywords that are included in operation $o_i$. Then $PT_p(i, j)$ = 1 if payload type for $w_{i_j}$ is bit-vector during the setup and  $PT_p(i, j)$= 0 otherwise where $1 \leq i \leq \eta$ and $1 \leq j \leq |o_i^w|$.

\vspace{1mm}
\textbf{{Group-Oriented Access Pattern ($\mathbf{A_p^g}$):}} Suppose $o_i$ is a search request for keyword $w_x$, $D(w_x)$ is the set of identifiers for the documents that contain keyword $w_x$, $g(id_j)$ denotes the group of document with identifier $id_j$, $block(id_j)$ and $slot(id_j)$ are the block and slot order in the search index payload for $w_x$ that hosts $id_j$. Then  $A_p(w_x, G_\iota)$ denotes a restricted access set such that [$id_j$, $block(id_j)$, $slot(id_j)$] $ ~\in~ A_p(w_x, G_\iota)$ if and only if $g(id_j) = G_\iota$ $\wedge$ $id_j \in  D(w_x)$. In this setting, suppose $G(U_i)$ = \{$G_{i,1}$, $...$ , $G_{i,f}$\} denotes the access groups of user $U_i$ who issued $i^{th}$ request. Then $A_p^g(o_i)$ = \{$(A_p(w_x, G_{i,1}),~G_{i,1})$, $...$, $(A_p(w_x, G_{i,f}), ~G_{i,f})$\}. 

\vspace{1mm}
\textbf{Deletion Pattern ($\mathbf{\beta_p}$):} Suppose $o_i$ is a deletion request for document $D_j$ which consists of keywords \{$w_{j_1}$, ...,$w_{j_\upsilon}$\}, $block(w_{j_\iota}, D_j)$ and $slot(w_{j_\iota}, D_j)$ denotes the block and slot order of $id(D_j)$ in the search index payload for $w_{j_\iota}$ and $|C_j|$ is the bit length of $D_j$'s encryption. Then  $L(o_i)$ = \{$block$($w_{j_1}$, $D_j$)||$slot$($w_{j_1}$, $D_j$), ..., $block$($w_{j_\upsilon}$)||$slot$($w_{j_\upsilon}$, $D_j$)\} is a location set and $\beta_p(o_i$) is a pair such that $\beta_p(o_i)$ = ($L(o_i)$, $|C_i|$).

\vspace{1mm}
\textbf{Addition Pattern ($\mathbf{\alpha_p}$):} Suppose $o_i$ is an addition request for a document collection \{$D_\iota$, ..., $D_\rho$\}, $|C_x|$ denotes the bit-length for the encrypted form of $D_x$, \{$w_{j_1}$, ...,$w_{j_\nu}$\} is a set of keywords that are included in the new corpus and $cnt(w_{j_\iota})$ denotes the number of documents in the new corpus that contain $w_{j_\iota}$. Then $\alpha_p(o_i)$ = (\{$|C_\iota|$, ..., $|C_\rho|$\},  \{$cnt(w_{j_1})$, ..., $cnt(w_{j_\nu})$\}).

Security definition based on these leakages which we call authorization-aware  adaptive semantic security for dynamic searchable symmetric encryption along with the security proof of the proposed scheme is provided in Appendix-A.

\section{Experimental Analysis}
\label{Experiments}

In this section, we provide an empirical analysis of the
proposed scheme. To perform our evaluation, we utilized a publicly available
dataset of real emails, namely Enron dataset \cite{Enron}. We selected all 30109 emails included in the sent-mail folder of all users as our experimental corpus.
Prior to index generation on the corpus, we applied Porter stemming algorithm \cite{Porter} on the e-mail contents to extract the roots of each word. After stemming and eliminating common words like `the', corpus consists of approximately 77000 keywords. We also generated a corpus of approximately 1,200,000 emails by data replication for scalability test. To simulate a real cloud environment, we formed a HBase cluster of twelve machines. Generated secure indices are distributed among these machines according to the proposed architecture.

\subsection{Search Evaluation}
\label{KeywordExperiment}
In this part, we evaluate the computationally efficiency of search scheme with distinct settings.  Storage type threshold ($\Upsilon$), dataset size ($n$), number of regions for index partitioning ($m$), and number of user groups ($|G(U)|$) are the parameters. Default values for these parameters are as follows:  $\Upsilon = 6$, $n = 30109$, $m = 1$, $|G(U)| = 1$. To investigate the influence of distinct parameters, we modified a single parameter at a time and used the default values for the others. 

In this study, we propose an inverted index where payloads are in the form of a encrypted bit vector or list during the setup. Payload type depends on a threshold $\Upsilon$ which is identified according to Theorem \ref{thrm:storage}.  This theorem assumes that frequencies of the keywords are distributed according to Zipfian distribution with exponent 1. It is clear that, it does not accurately capture the frequency distribution of the underlying dataset, but it is generally a close approximation. In fact, our empirical analysis indicate that default $\Upsilon$ (i.e., $\Upsilon$ = $6$) that is computed according to Theorem \ref{thrm:storage} provides considerable storage savings in compare to pure bit vector or list payloads as depicted in Figure \ref{Fig:Exp1}-a.

\begin{figure*}[!htb]
\centering
\subfigure[$\Upsilon$ vs. index storage]{
\includegraphics[height=32mm, width=39mm]{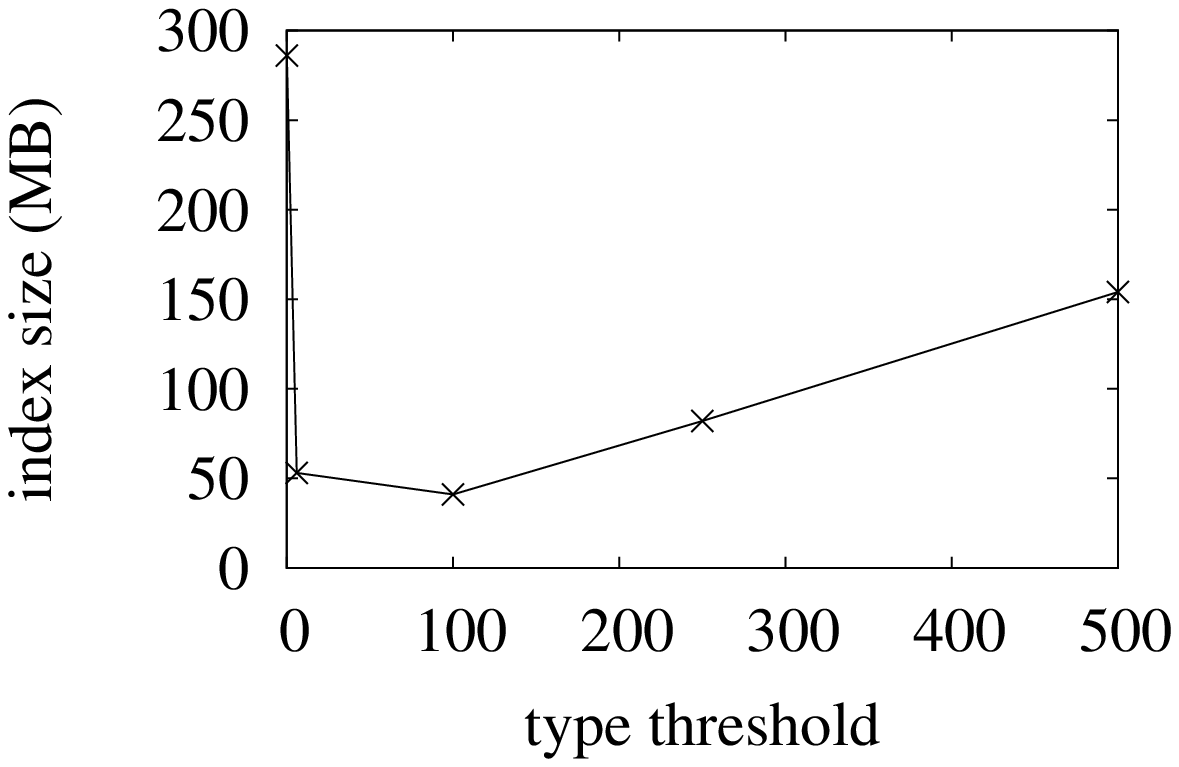}
}
\subfigure[n vs. search time]{
\includegraphics[height=32mm, width=39mm]{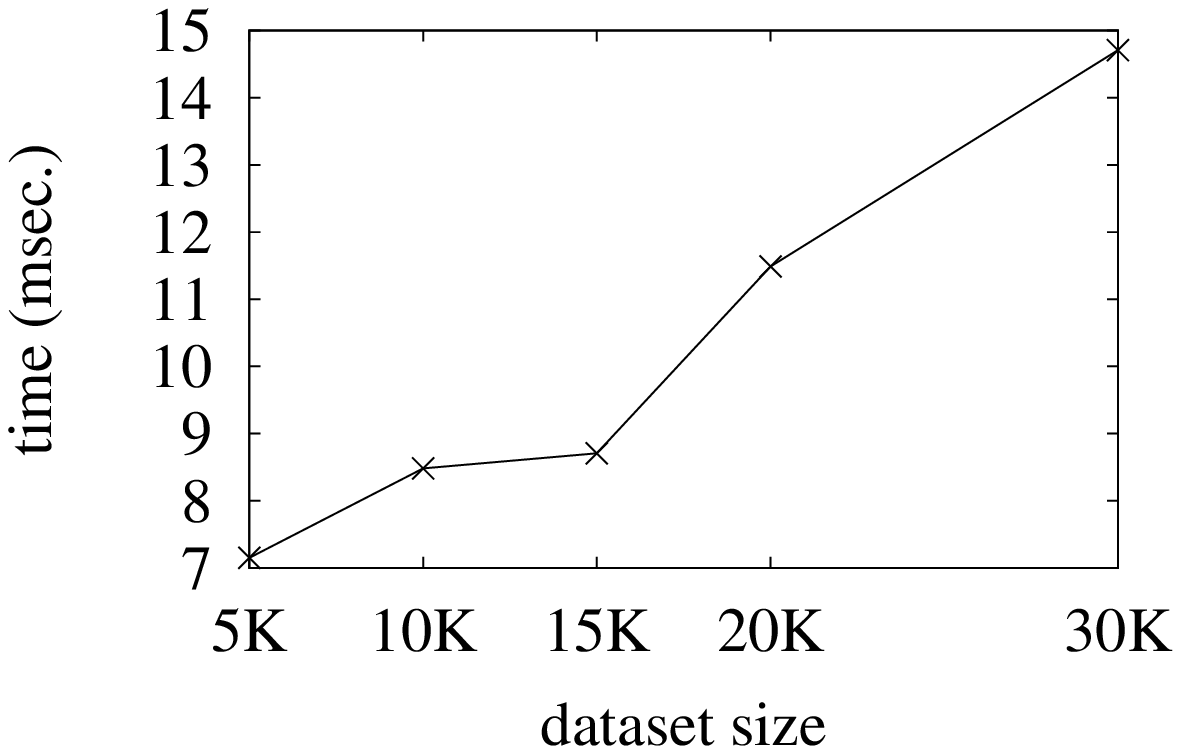}
}
\subfigure[m vs. search time]{
\includegraphics[height=32mm, width=39mm]{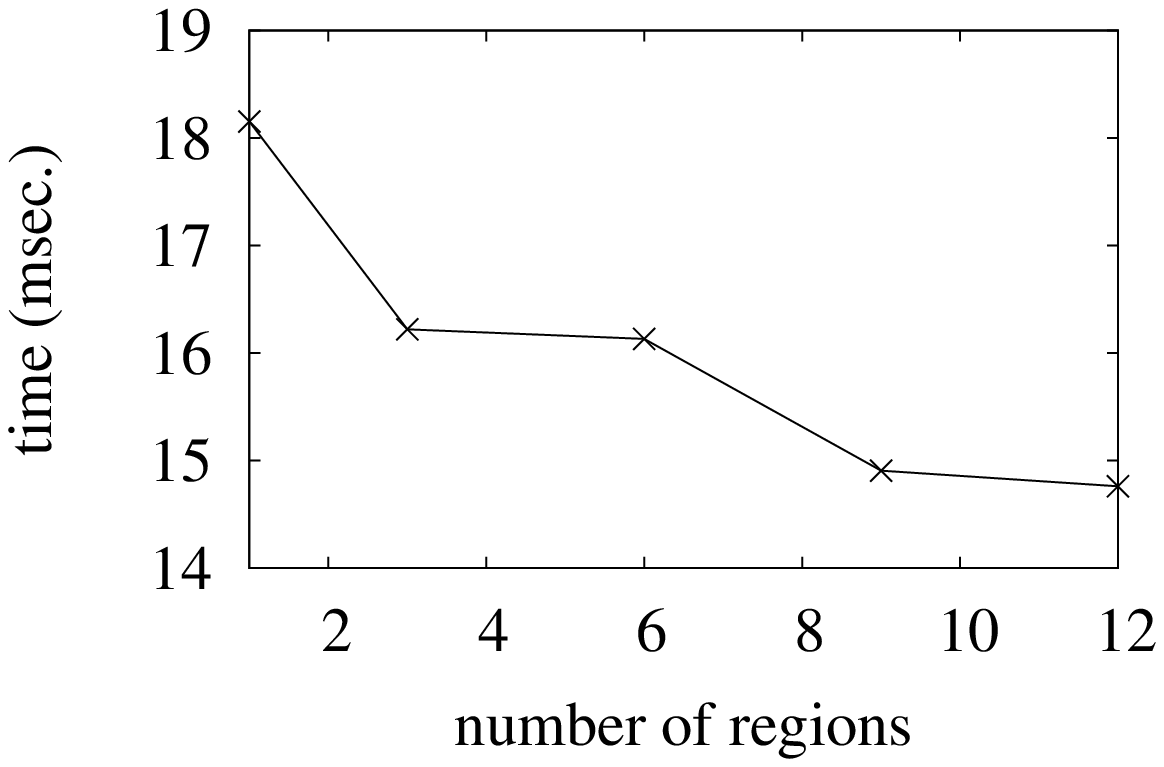}
}
\subfigure[|G(U)| vs. search time]{
\includegraphics[height=32mm, width=39mm]{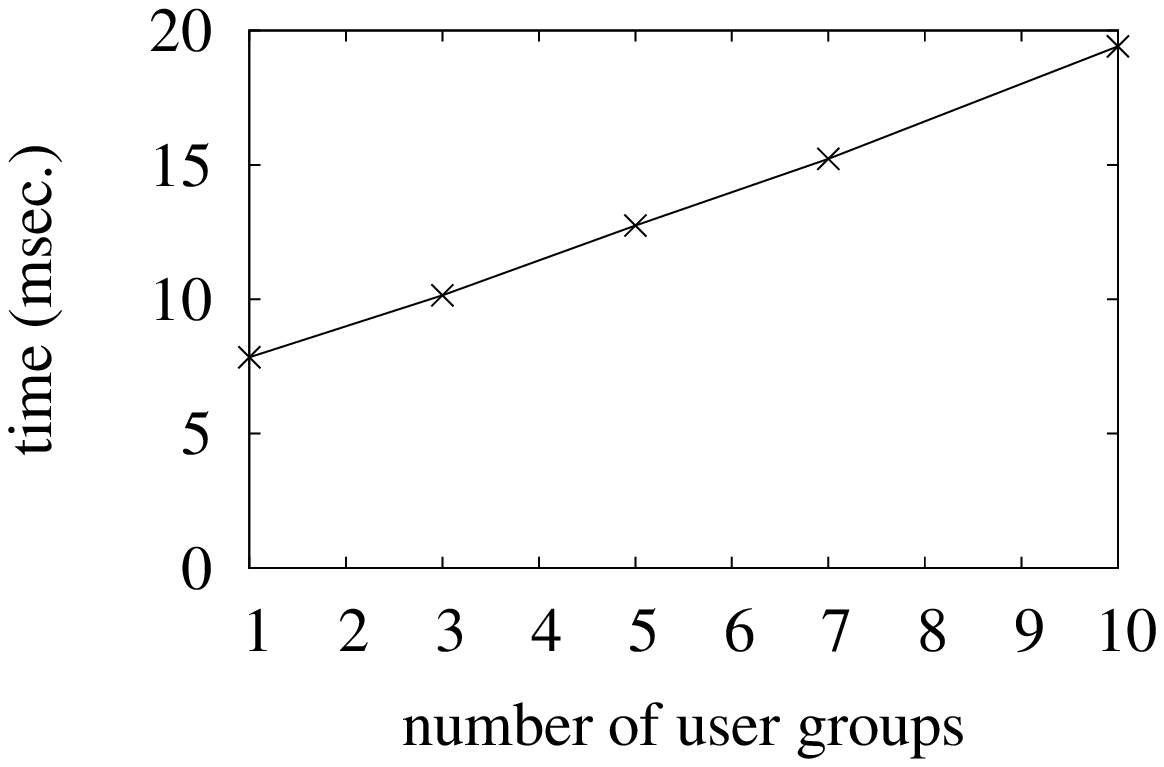}
}
\vspace{-3mm}
\caption[]{Influence of Protocol Parameters on Search Performance}
\label{Fig:Exp1}
\end{figure*}

To investigate the effect of distinct parameters on the search performance, we measured the time between the query request and identification of search results on the HBase cluster. To do so, we generated 1000 trapdoors by randomly drawing a keyword among all keywords. Reported timing results are the averages for the issued trapdoors. The search time is almost linear in the number of documents as depicted in Figure \ref{Fig:Exp1}-b. This is because, the number of encrypted payload blocks is proportional to the number of documents in the corpus. During the search phase, these blocks are decrypted to identify the matching documents.

Figure \ref{Fig:Exp1}-c demonstrates the influence of the number of regions generated on the HBase cluster. Note that, secure index is uniformly distributed among each region. Later, search on this distributed index is performed in parallel on each region. Hence, increase in the number of regions decrease the unit load of each machine.

In this study, we proposed authorization-aware keyword search protocol. During its execution, users are requested to issue a trapdoor component for each group that they are involved in. Once these components are received by the server, decryption operation is performed for each of them separately. Hence, the necessary computations is linearly proportional to the number of groups that a user is involved in as depicted in Figure \ref{Fig:Exp1}-d. Note that, constructed index payloads contain owner blocks in addition to user blocks. This enables owner queries to be executed as if the owner is a member of a single group.

To evaluate the scalability of the proposed scheme, we replicated the emails in our corpus 40 times. After the replication, we generated secure index on a corpus of 1,204,360 emails which fit into approximately 2.25 GB of storage. This index was further distributed to multiple regions in our HBase cluster. 
During this construction, if pure bit vector payloads were utilized instead of the proposed mixture of bit-vector and list payloads, the amount of necessary storage would be approximately 11 GB. Figure \ref{Fig:Exp2} demonstrates the average search time for the issued 1000 random search requests. Search operation could be performed in milliseconds according to our analysis. Note that, increase in the number of regions does not always reduce the search time. There is a slight increase in search time from $m = 1$ to $m = 3$. This is due to the network latency during the collection of partial results from the individual machines. Master needs to gather outputs from each server where regions are hosted. On the other hand, more regions generally decrease the total search time since each machine finishes its job faster with less load. It is clear that proposed scheme along with the capacity of real cloud infrastructures enables highly scalable search capability over encrypted document collections. 

\begin{figure}[!htb] 
\begin{center}
\includegraphics[scale = 0.35]{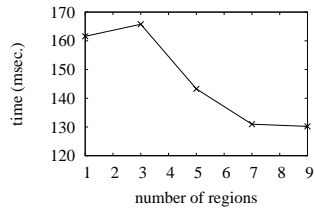}
\end{center}
\vspace{-8mm}
\caption{Scalability Evaluation}
\label{Fig:Exp2}
\end{figure}

\subsection{Update Evaluation}
In this part, we evaluate the computationally efficiency of the update mechanism. To perform our evaluation, we built secure update index on our email corpus as described in Section \ref{Update}. During its construction, we apply padding on the index payloads or document themselves to hide the actual number of keywords that are included in the documents based on a unit keyword-length parameter denoted as $avg_{|w|}$. Figure \ref{Fig:Exp3} demonstrates the influence of this parameter on the storage. Increase in $avg_{|w|}$  leads to a significant reduction in the index size. This is because, payload of the update index corresponding to document $D_i$ contains $|C_i| / avg_{|w|}$ entries where $C_i$ is the encrypted form of $D_i$. Hence, with increasing $avg_{|w|}$, number of entries in the payloads become less. On the other hand, increase in $avg_{|w|}$ leads to some increase in the size of the encrypted document collection. Note that, if $D_i$ contains $cnt_i$ keywords. Then $|C_i| / avg_{|w|} ~\ge~ cnt_i$ should hold true. If  $avg_{|w|}$ becomes larger, we need to pad documents to satisfy the necessary condition.

\begin{figure}[!htb] 
\begin{center}
\includegraphics[scale = 0.38]{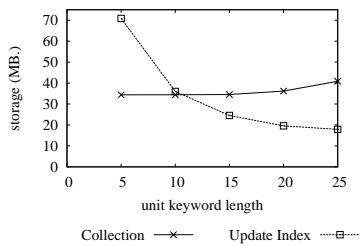}
\end{center}
\vspace{-6mm}
\caption{Influence of ${\mathbf{avg_{|w|}}}$ on Storage}
\label{Fig:Exp3}
\end{figure}

To evaluate the efficiency of the update operations, we measure the time between the request and completion of modifications on the indices. Resource consumption of both addition and deletion requests is based on the number of keywords that are associated with the operation. Hence, we generated update requests for documents with distinct amount of keywords. Specifically, we selected 10 random emails from the corpus for each distinct keyword size (e.g., 50, ..., 250) and we issued a token for deletion and addition of the selected emails in sequence. Reported timing results are the averages over 10 executions. Figure \ref{Fig:Exp4} depicts the update time for deletion and addition of a document with distinct keyword size. With increasing number of keywords, update time for both addition and deletion increases since the number of operations both on the search and update index are linearly proportional to the number of keywords.

\begin{figure}[!htb] 
\begin{center}
\includegraphics[scale = 0.38]{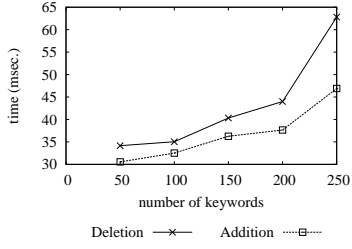}
\end{center}
\vspace{-6mm}
\caption{Update Efficieny}
\label{Fig:Exp4}
\end{figure}

\section{Conclusion}
\label{Conclusion}
In this paper, we propose a search scheme over encrypted documents for real cloud infrastructures. Proposed design is based on a distributed secure index which allows parallel execution of the search process among many machines. To restrict information leakage of the scheme according to the user access rights, proposed approach integrates authorization into the design. In addition, we also propose an effective update mechanism for distributed search index to be compatible with the dynamic nature of real storage systems. To investigate the efficiency of the proposed scheme, we conduct empirical analysis on a real dataset. Empirical evaluations indicate the practical nature of the proposed scheme.

In the context of this study, we provide a vertically partitioned index architecture to enable simultaneous decryption of large payloads by multiple machines during the search. In future work, we plan to design a hybrid architecture that consists of both vertically and horizontally partitioned segments to utilize resources according to query workload.

\appendix

\section{Security Definition}

Prior to the security definition, we need some auxiliary notions which can be summarized as follows:

\textbf{History $\mathbf{(H_\eta)}$:} Let D be the document collection and  $OP$ =$\{o_1, ... , o_\eta\}$ be the consecutive search, addition or deletion requests that are issued by users $U_1$, ... $U_\eta$ with respective access groups $G$ = \{$G(U_1)$, ..., $G(U_\eta)$\}. Then, $H_\eta = (D, OP, G)$ is defined as an $\eta$-query history.

\textbf{Trace ($\mathbf{\gamma}$):} Let $C = \{C_1, ..., C_n\}$ be the collection of encrypted documents, $|C_i|$ be the size of $C_i$, $g: D \mapsto G$ is a function that maps the document identifiers to the groups in $G$, $KW_p({H_\eta})$, $PT_p({H_\eta})$,  $A_p^g({H_\eta})$, $\beta_p(H_\eta)$, $\alpha_p(H_\eta)$ be the keyword, payload type, group-oriented access, deletion and addition patterns for $H_\eta$, $n_L$ and $n_B$ denotes the number of keywords in the search index with list and bit-vector payloads during the initial setup, $avg_{|w|}$ be the unit keyword length parameter for the construction of update index. Then, $\gamma(H_\eta)$ = \{$(|C_1|, ..., |C_n|)$, $g$, $KW_p(H_\eta)$, $PT_p(H_\eta)$, $A_p^g(H_\eta)$, $\beta_p(H_\eta)$, $\alpha_p(H_\eta)$ $n_L$, $n_B$, $avg_{|w|}$\} is defined as the trace of $H_\eta$. Trace is the maximum amount of information that a data owner allows its leakage to an adversary.

\textbf{View (v):} Let $C = \{C_1, ..., C_n\}$ be the collection of encrypted data items, $I$ be the secure search index, $U$ be the secure update index and $T = \{T_{o_1}, ..., T_{o_\eta}\}$ be the tokens for $\eta$ consecutive requests in $H_\eta$. Then, $v(H_\eta) = \{C, I, U, T\}$  is defined as the view of $H_\eta$. View is the information that is accessible to an adversary.

Now we can move into the security definition.

\textbf{Authorization-Aware Adaptive Semantic Security for Dynamic SSE:} SSE scheme satisfies authorization-aware adaptive semantic security in random oracle model, if there exists a probabilistic polynomial time simulator $S$ that can adaptively simulate the adversary's view of the history from the trace with probability negligibly close to $1$ through interaction with random oracle. Intuitively, this definition implies that all the information that is accessible to the adversary can be constructed from the trace. More formally, let $H_\eta$ be a random history from all possible histories, $v(H_\eta)$ be the view, $\gamma(H_\eta)$ be the trace of $H_\eta$. Then, scheme satisfies the security definition in random oracle model if one can define a simulator S such that for all polynomial size distinguishers $Dist$, for all polynomials $poly$ and a large $\theta$:   

\vspace{-3mm}
{\footnotesize{
\begin{equation*}
Pr[Dist(v(H_\eta)) = 1] - Pr[Dist(S(\gamma(H_\eta))) = 1] < \frac{1}{poly(\theta)} 
\end{equation*}}}

\noindent
where probabilities are taken over $H_\eta$ and the internal coins of key generation and encryption.

\begin{thrm}
Proposed scheme satisfies the authorization-aware adaptive semantic security.
\end{thrm}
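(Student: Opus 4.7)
The plan is to exhibit a probabilistic polynomial time simulator $S$ which, given only the trace $\gamma(H_\eta)$, produces a view that is computationally indistinguishable from the real view $v(H_\eta)$. I would build $S$ incrementally in the standard Curtmola style, adapted to the payload-type and group-oriented leakage of our construction: $S$ first simulates the initial state (encrypted collection $C^*$, search index $I^*$, update index $U^*$), and then for each adaptive request $o_i$ it simulates the corresponding token $T_{o_i}^*$ and programs the random oracles $O_S$ and $O_D$ on the fly so that every decryption the adversary performs is consistent with the leakage functions and with values previously committed.

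For the setup phase, $S$ samples each $C_i^*$ as a uniformly random bit string of length $|C_i|$, which is indistinguishable from a real ciphertext by IND-CPA security of the symmetric cipher. Using $n_L$ and $n_B$, it fabricates the right number of list-type and bit-vector-type payload rows in the search index; each keyword tag $E_w(w_i)^*$ is a fresh random string of length $\alpha$, and each payload block $\pi_{w_i}^j{}^*$ together with its owner counterpart $\phi_{w_i}^j{}^*$ is a fresh random string of the fixed length dictated by the type. Partitioning into regions uses the public prefix scheme on these random tags exactly as in the real construction. The update index is simulated analogously: for each document $D_i$, $S$ allocates $\lceil |C_i|/avg_{|w|}\rceil$ encrypted address slots of the correct length, each filled with a fresh random string, and uses a fresh random string as the key tag $E_D(D_i)^*$.

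For each operation $o_i$, $S$ first consults $KW_p$: whenever a keyword appears for the first time it draws fresh random tags and random-oracle keys $(E_w^*, K_o^*(w))$ and, for each group $G_\iota$ it will ever need, $K^*(w, G_\iota)$; repeat occurrences reuse the earlier values. For a search request issued by $U_k$, $S$ emits $(E_w^*,\{(G_{k,j}, K^*(w, G_{k,j}))\}_j)$. It then reads $A_p^g(o_i)$ to learn the exact $(\mathit{id},\mathit{block},\mathit{slot})$ triples that each group key must reveal, and it programs $O_S$ on each input $(K^*(w, G_\iota), j)$ so that xoring it with the already-committed ciphertext block yields: the true identifier in the indicated slots of list blocks, the true bit in the indicated positions of bit-vector blocks (applying the group-oriented bit-selection rule using $g$ from the trace), and uniform bits elsewhere. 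The owner block $\phi_{w_i}^j{}^*$ is handled independently by programming $O_S$ at $(K_o^*(w), j)$. For a deletion request, $S$ programs $O_D$ at $(K^*(D_i), k)$ so that the decrypted address list equals the location set in $\beta_p(o_i)$ padded with fake entries, and then flips bits or sets signal-array bits in the affected search-index blocks exactly as the real master would. For an addition, $S$ uses $\alpha_p(o_i)$ to determine per-keyword slot demand and simulates the two-round block transfer and helper-index update; new blocks are random strings, and $O_S$ is programmed at the corresponding counter inputs to be consistent with the freshly chosen identifiers.

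Indistinguishability is established by a hybrid argument: replace each real PRF ($\Phi_{K_t}$, $\Psi_{K_p}$, $\Psi_{K_{G_\iota}}$, $\Psi_{K_o}$, $\Phi_{K_D}$, $\Psi_{K_A}$) by an independent truly random function, undetectable by PRF security; replace each real document ciphertext by a random string of matching length, undetectable by IND-CPA security; observe that once PRFs are replaced the pad values output by the random oracles are uniform, so programming succeeds unless the adversary has already queried $O_S$ or $O_D$ at one of the targeted inputs, which occurs with negligible probability because those inputs depend on keys drawn from a super-polynomial space. The main obstacle I anticipate is the group-oriented bit-vector case combined with adaptivity: a single bit-vector block is simultaneously encrypted under possibly many group keys and under the owner key, and different users with different group memberships may query the same keyword at different times, each forcing $S$ to commit to fresh oracle outputs at $(K^*(w, G_\iota), j)$ pairs that must jointly reconstruct, bit-for-bit, the already-fabricated ciphertext bits. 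Keeping a careful log that records, for every committed ciphertext bit and every potential group key, which future oracle answer is still free, and proving that $S$ never faces an over-determined system except with negligible probability, is the delicate bookkeeping step; once this is in place, the remaining arguments reduce to the standard template of dynamic SSE simulation extended by the $PT_p$ and $A_p^g$ leakages.
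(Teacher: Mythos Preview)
Your proposal is correct and follows essentially the same simulation-based approach as the paper: fabricate random ciphertexts, random search and update index entries of the right shape from $(n_L,n_B,|C_i|,avg_{|w|})$, reuse tags via $KW_p$, and program $O_S,O_D$ on the fly from $A_p^g$, $\beta_p$, $\alpha_p$ so that the adversary's decryptions match the leakage. The over-determination worry you flag for bit-vector blocks is in fact benign in this construction, since the user ciphertext $\pi_{w_i}^j$ and owner ciphertext $\phi_{w_i}^j$ are stored as separate strings and, within $\pi_{w_i}^j$, each bit position is governed by exactly one group $g(id_{j,\rho})$ (bits outside a group's positions are discarded by the server's bit-selection step), so the oracle answers for distinct $(K^*(w,G_\iota),j)$ impose disjoint constraints and can always be programmed consistently.
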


{\emph {Proof.}} We will show the existence of polynomial size simulator $S$ such that the simulated view $v_S(H_\eta)$ and the real view $v_R(H_\eta)$ of history $H_\eta$ are computationally indistinguishable. Let $v_R(H_\eta)$ = \{$C$, $I$, $U$, $T$\} be the real view. Then, $S$ adaptively generates the simulated view $v_S(H_\eta)$ = \{$C^*$, $I^*$, $U^*$, $T^*$\} using trace $\gamma(H_\eta)$.

\begin{list}{\labelitemi}{\leftmargin=0.2em}

\item $S$ chooses n random values $\{{C_1}^*, ..., {C_n}^*\}$ such that $|{C_1}^*| = |C_1|, ...,  |{C_n}^*| = |C_n|$. In this setting, $C_i$ is output of a secure encryption scheme. By the pseudorandomness of the applied encryption, $C_i$ is computationally indistinguishable from ${C_i}^*$.

\item Given the number of documents and keywords in the collection (i.e., $n$, $n_L + n_B$) and identifier length (i.e., $|id|$), $S$ computes the number of elements for index payloads of type list which is represented as $\Upsilon$. Then it finds the number of blocks for list and bit-vector payload types which are denoted by $cnt_L$ and $cnt_B$ respectively. Specifically, $cnt_L$ is $\lceil \Upsilon / (\lfloor \kappa / |id| \rfloor) \rceil$ where $\kappa$ is output length of the random oracle $O_S$. Similarly, $cnt_B$ is $\lceil n /\kappa  \rceil$. In this setting, suppose $\varphi$ is the output length of pseudo-random function $\Phi$, $n_L$ and $n_B$ are number of keywords in the search index with list and bit-vector payloads. Then, $n_B$ index entries generated for keywords with bit-vector and $n_L$ entires are formed for keywords with list payloads. Specifically, $S$ generates pair ($k_i^{*}$, $v_i^{*}$) where $k_i^{*}$ is a random value of length $\varphi$ and $v_i^{*}$ is a collection of random values each of which has length $\kappa$. The number of elements in $v_i^{*}$  is $cnt_B$ for bit-vector and $cnt_L$ for list payloads. Finally, pairs $(k_i^{*}$, $v_i^{*})$ for $1 \le i \le (n_L + n_B)$ constitutes the simulated search index $I^{*}$. Note that, for any real entry ($k_i$, $v_i$) $\in I$, there is a corresponding simulated entry ($k_j^*$, $v_j^*$) $\in I^*$. Here length of $k_i$ and $k_j^*$, number of blocks in $v_i$ and $v_j^*$ along with the individual block lengths are equal. In addition, encrypted keys and blocks are computationally indistinguishable from random values by the pseudo-randomness of the applied encryptions. 

\item Given the length of ciphertexts \{$|C_1|$, ..., $|C_n|$\} along with the unit keyword length denoted as $avg_{|w|}$, S computes the number of payload entries for each document in the update index. Specifically, $u_i$ entries is formed for document $D_i$ where $p_i = |C_i|/avg_{|w|}$.
In this setting, suppose $\upsilon$ is the output length of pseudorandom function that is applied on document names, $\xi$ is the output length of random oracle $O_D$. Then $S$ generates a pair  ($Uk_i^{*}$, $Uv_i^{*}$) for each $C_i$ such that  $Uk_i^{*}$ is set to a random value of length $\upsilon$ and $Uv_i^{*}$ is set to a collection of $p_i$ random values each of which has length $\xi$. Finally, pairs  ($Uk_i^{*}$, $Uv_i^{*}$) for $1 \le i \le n$ constitutes the simulated update index $U^{*}$.
Note that, for real entry ($Uk_i$, $Uv_i$) $\in U$, there is a corresponding simulated entry ($Uk_i^*$, $Uv_i^*$) $\in U^*$. Here length of $k_i$ and $k_j^*$, number of blocks in $Uv_i$ and $Uv_i^*$ along with the individual block lengths are equal. In addition, encrypted key, value pairs are computationally indistinguishable from the random values by the pseudo-randomness of the applied encryption.

\item $S$ simulates requests $T_{o_1}$, ..., $T_{o_\eta}$ according to their type: 

{\bf 1) $\mathbf{T_{o_i}}$ is a search request:} Suppose $T_{o_i}$ = ($\pi_{i_1}$, \{[$G_{i_1}$, $K_{i_1}$], ..., [$G_{i_f}$, $K_{i_f}$]\})is a search request. Then, if $KW_p$($(i, 1)$, $(j,\ell)$) $= 1$ for any $1 \le j < i$, then $\pi_{i_1}^* = \pi_{j_\ell}^*$. Otherwise $\pi_i^*$ is set to a random row-key ${k_\iota}^*$ from the simulated search index in such a way that selected row-key was not previously selected during the simulation. Specifically, a row-key ${k_\iota}^*$ is selected from $I^*$ with a payload of type list if $PT_p(i, 1) = 1$ and a payload type of bit-vector otherwise.  Here, suppose $G(U_i)$ represents the access groups of the user who issued the request. If $KW_p((i,1), (j,1)) = 1$  and $G_{i_\rho} = G_{j_\vartheta}$  where $G_{i_\rho} \in G(U_i)$ and $G_{j_\vartheta} \in G(U_j)$ for any $1 \le j < i$, then $K_{i_\rho}^*$ = $K_{j_\vartheta}^*$. Otherwise  $K_{i_\rho}^*$ is set to a random value, length of which is equal to the output length of pseudo-random function $\Psi$. Note that group information that are associated with the search request is included in the group oriented access pattern (i.e., $A_p^g$). In this setting, components of simulated and real requests are computationally indistinguishable by the pseudo-randomness of the applied encryption. Hence, $T_{o_i}$ and $T_{o_i}^*$ are computationally indistinguishable. To ensure that server observes the same data access against $T_{o_i}$ and $T_{o_i}^*$, $S$ programs  random oracle $O_S$ according to $A_p^g$. Suppose $k_{\iota}^*$ is row-key in $I^*$ that is assigned to $\pi_{i_1}^*$ component of $T{o_i}^*$. Then, for each $A_p(w_x, G_{i_\rho})$ that is included in $A_p^g(o_i)$, $S$ selects blocks from $v_{\iota}^*$ which is the value corresponding to $k_{\iota}^*$ in $I^*$. Here, suppose [$id_j$, $block(id_j)$, $slot(id_j)$] $\in$  $A_p(w_x, G_{i_\rho})$. Then S selects $k^{th}$ block from  $v_{\iota}^*$ where $k = block(id_j)$ and  programs $O_S$ in such a way that when the slot with order $slot(id_j)$ in this block is decrypted with key $K_{i_\rho}^*$, server observes $id_j$.

{\bf 2) $\mathbf{T_{o_i}}$ is  a deletion request:} Suppose $T_{o_i}$ = ($\sigma_i$, $K_i$) is a deletion request, $\beta_p(o_i)$ = ($L_{o_i}$, $|C_i|$) is  deletion pattern for $o_i$, $\varrho$ is the output length of pseudo-random function that is used for payload encryption of the update index. Then $S$ selects previously unselected key-value pair ($Uk_j^*$, $Uv_j^*$) from simulated update index $U^*$ in such a way that  number of entries in  $Uv_j^*$ is $\lceil |C_i| / avg_{|w|} \rceil$. Then $S$ sets $\sigma_i^*$ to $Uk_j^*$ and $K_i^*$ to a random value of length $\varrho$. To ensure that adversary applies the correct modifications on the search index,  $S$ utilizes $\beta_p$ and $KW_p$. Suppose $L(o_i)$ = \{$block_{i_1}||slot_{i_1}$, ..., $block_{i_\upsilon}||slot_{i_\upsilon}$\}. Then, for each entry $block_{i_\rho}||slot_{i_\rho}$ in  $L(o_i)$, $S$ initially selects a row-key  $\pi_{i_\rho}^*$ from $I^*$. Specifically, if $KW_p((i,\rho), (j,\ell)) = 1$ for any $1 \le j < i$, then $\pi_{i_\rho}^* = \pi_{j_\ell}^*$. Otherwise $\pi_{i_\rho}^*$ is set to a random row-key from $I^*$ in the same way as row-key selection for the search request. Once  $\pi_{i_\rho}^*$ is fixed, S programs $O_D$ in such a way that, adversary observes 
$\pi_{i_\rho}^*$||$block_{i_\rho}||slot_{i_\rho}$ when $\rho^{th}$ entry of $Uv_j^*$ is decrypted with key $K_i^*$. Note that, adversary observes the correct addresses for the modifications on the search index once $O_D$ is programmed and it performs the necessary modifications on the search index according to protocol flow.

{\bf 3) $\mathbf{T_{o_i}}$ is an addition request:}  Suppose $T_{o_i}$=  (($\pi_{w_1^u}$, $cnt(w_1^u)$), ..., ($\pi_{w_\upsilon^u}$, $cnt(w_\upsilon^u)$) is a deletion request, $\alpha_p(o_i)$ = (\{$|C_\iota|$, ..., $|C_\rho|$\}, \{$cnt(w_1^u)$, ...,$cnt(w_\upsilon^u)$\} denotes addition pattern. Then, for each pair ($\pi_{w_\rho^u}$, $cnt(w_\rho^u)$), S simply copies $cnt(w_\rho^u)$ from $A_p(o_i)$ to form  $cnt(w_\rho^u)^*$. Then, if $KW_p((i,\rho), (j,\ell)) = 1$ for any $1 \le j < i$, then $\pi_i^* = \pi_{j_\ell}^*$. Otherwise $\pi_i^*$ is set to a random row-key from the simulated index $I^*$ as in the token construction for search requests. Note that addition is a two-round process. Once adversary receives the simulated addition request, it identifies the blocks with available slots in $I^*$ for each $\pi_{w_\rho^u}^*$ according to the protocol flow and returns them. Suppose $\Delta(w_\rho^u)^*$ is the block list that is received from the adversary for $\pi_{w_\rho^u}^*$. Each block in  $\pi_{w_\rho^u}^*$ is in the form of $v_j^*[\iota] || S[\iota]$ where $v_j^*[\iota]$ is a column value in $I^*$ and $S[\iota]$ is the corresponding signal array which indicates the invalid slots in $v_j^*[\iota]$. In this setting, suppose $av_\rho$ is the number of invalid slots in   $\Delta(w_\rho^u)^*$ and $slot_{cnt}$ is the number of slots in each block. Then S generates  $|\Delta(w_\rho^u)^*|$ + $\lceil((cnt(w_\rho^u) - av_\rho)/slot_{cnt}) \rceil$ new simulated blocks such that each block is in the form of $r_j^* || S_j^*$ where $r_j^*$ is a random value of length $|v_j^*[\iota]|$ and $S_j^*$ is a zero vector of length $slot_{cnt}$. Finally, these blocks are returned to adversary as simulated update blocks. During the addition simulation, $S$ needs to simulate new ciphertexts and update index entries as well.  Given the length of ciphertexts  \{$|C_\iota|$, ..., $|C_\rho|$\}, $S$ performs the ciphertext and update index simulation process of  steps 1 and 3 respectively.
\end{list}

\noindent
Since each component of $v_R(H_\eta)$ and  $v_S(H_\eta)$ are computationally indistinguishable, we can conclude that the proposed scheme satisfies the  security definition.

\bibliographystyle{acm}
\bibliography{bibliography}

\end{document}